\documentclass[11pt]{article}

\usepackage[utf8]{inputenc} 
\usepackage[T1]{fontenc}    
\usepackage{hyperref}       
\usepackage{url}            
\usepackage{booktabs}       
\usepackage{amsfonts}       
\usepackage{nicefrac}       
\usepackage{microtype}      

\usepackage{multirow}
\usepackage{amstext}
\usepackage{amsmath}
\usepackage{amsthm}
\usepackage{amssymb}
\usepackage{bm}
\usepackage{wrapfig}
\usepackage{amsfonts, amsmath}
\usepackage[utf8]{inputenc}
\usepackage{graphicx}
\usepackage{bbm, comment}
\usepackage{subfigure}
\usepackage{color}
\usepackage{float}
\usepackage{wrapfig}
\usepackage{enumitem}
\usepackage{url}
\usepackage{MnSymbol,wasysym,marvosym,ifsym}

\usepackage[ruled]{algorithm2e} 

\makeatletter
\def\BState{\State\hskip-\ALG@thistlm}
\makeatother

\usepackage{scalerel,stackengine}
\stackMath
\newcommand\reallywidehat[1]{%
\savestack{\tmpbox}{\stretchto{%
  \scaleto{%
    \scalerel*[\widthof{\ensuremath{#1}}]{\kern-.6pt\bigwedge\kern-.6pt}%
    {\rule[-\textheight/2]{1ex}{\textheight}}
  }{\textheight}%
}{0.5ex}}%
\stackon[1pt]{#1}{\tmpbox}%
}
\parskip 1ex

\usepackage{tabulary}
\usepackage{booktabs}

\usepackage[top=1.in, bottom=1.in, left=1.in, right=1.in]{geometry}
\usepackage[numbers]{natbib}

\newtheorem*{theorem*}{Theorem}
\newtheorem{theorem}{Theorem}[section]
\newtheorem{lemma}[theorem]{Lemma}

\newtheorem{corollary}[theorem]{Corollary}
\newtheorem{example}[theorem]{Example}

\newtheorem{definition}[theorem]{Definition}

\newcommand{\dmi}{\textsc{DMI}}

\newcommand{\dmic}{\text{DMI-C}}
\newcommand{\dmis}{\text{DMI-S}}

\newcommand{\argmax}{\mathrm{argmax}}

\newcommand{\idxmax}{\mathrm{idxmax}}

\newcommand{\ncol}{\mathrm{ncol}}

\newcommand{\diag}{\mathrm{diag}}



\SetCommentSty{mycommfont}

\SetKwInput{KwInput}{Input}                
\SetKwInput{KwOutput}{Output}              


\begin{document}
\title{Information Elicitation Meets Clustering}
\author{Yuqing Kong \\The Center on Frontiers of Computing Studies,\\Peking University \\\texttt{yuqing.kong@pku.edu.cn} \\}
\date{}
\maketitle

\begin{abstract}
	In the setting where we want to aggregate people's subjective evaluations, plurality vote may be meaningless when a large amount of low-effort people always report ``good'' regardless of the true quality. ``Surprisingly popular'' method, picking the most surprising answer compared to the prior, handle this issue to some extent. However, it is still not fully robust to people's strategies. Here in the setting where a large number of people are asked to answer a small number of multi-choice questions (multi-task, large group), we propose an information aggregation method which is robust to people's strategies. Interestingly, this method can be seen as a rotated ``surprisingly popular''. It is based on a new clustering method, Determinant MaxImization (DMI)-clustering, and a key conceptual idea that information elicitation without ground-truth can be seen as a clustering problem. Of independent interest, DMI-clustering is a general clustering method that aims to maximize the volume of the simplex consisting of each cluster's mean multiplying the product of the cluster sizes. We show that DMI-clustering is invariant to any non-degenerate affine transformation for all data points. When the data point's dimension is a constant, DMI-clustering can be solved in polynomial time. In general, we present a simple heuristic for DMI-clustering which is very similar to Lloyd's algorithm for k-means. Additionally, we also apply the clustering idea in the single-task setting and use spectral method to propose a new aggregation method that utilizes the second-moment information elicited from the crowds.

\end{abstract}


\section{Introduction}

When we want to forecast an event, make a decision, evaluate a product, or test a policy, we need to elicit information from the crowds and then aggregate the information. In many scenarios, we may not have ground-truth answer to verify the elicited information. For example, when we poll people's voting intentions or elicit people's subjective evaluations, there is no ground-truth for people's feedback. Recently, a line of work \cite{MRZ05,prelec2004bayesian,2016arXiv160303151S,Kong:2019:ITF:3309879.3296670,liuchen,kong2020dominantly} explore how to incentivize people to be honest without ground-truth. Some work \cite{prelec2017solution,liuchen} also study how to aggregate the information without ground-truth. 

A natural way to aggregate information without ground-truth is ``plurality vote'', i.e., output the most popular option. For example, in product evaluation, when the feedback of is 56\% ``good'', 40\% ``so so'', 4\% ``bad'', the plurality answer is ``good''. However, if there are a large amount of people who always choose ``good'' regardless of the quality, plurality vote will always output ``good'' thus becomes meaningless. A more robust method is ``surprisingly popular'', proposed by \citet{prelec2017solution}. In the above example, if the prior is 70\% ``good'', 26\% ``so so'', 4\% ``bad'', ``surprisingly popular'' will aggregate the feedback to ``so so'' which is the most surprisingly popular answer ($\frac{40}{26}>\frac{56}{70},\frac44$). This will handle the above issue to a certain degree. However, ``surprisingly popular'' may not work in some scenarios. 

\begin{example} 
	When the ground truth state is ``good'', the distribution over people's feedback is (56\% ``good'', 40\% ``so so'', 4\% ``bad''); when the ground truth state is ``so so'', the distribution is (56\% ``good'', 34\% B, 10\% ``bad''); ground truth state ``bad'' corresponds to (46\% ``good'', 44\% ``so so'', 10\% ``bad''). Each state shows up with equal probability. Thus, the prior is approximately (52.7\% ``good'',39.3\% ``so so'',8\% ``bad'')
\end{example} 

In the above example, for both (56\% ``good'', 34\% ``so so'', 10\% ``bad'') and (46\% ``good'', 44\% ``so so'', 10\% ``bad''), the ``surprisingly popular'' answer is ``bad'', while they correspond to different ground truth. 

Therefore, ``surprisingly popular'' requires an additional prior assumption about the relationship between the ground truth state and people's answers \cite{prelec2017solution}. However, even if people's honest answers satisfy the assumption, people can perform a strategy to their answers to break the assumption. Here is an example. 

\begin{example}
	When the ground truth state is ``good'', the distribution over people's feedback is (100\%,0\%,0\%); when the ground truth state is ``so so'', the distribution is (0\%,100\%,0\%); ground truth state ``bad'' corresponds to (0\%,0\%,100\%). All people perform a strategy: given my private answer is ``good'', I will report ``good'' with probability .56, ``so so'' with probability .4, ``bad'' with probability .04, given my answer is ``so so'', I will ... The strategy is represented by a row-stochastic matrix $\begin{bmatrix}
		.56&.4&.04\\
		.56& .34&.1\\
		.46 &.44&.1
	\end{bmatrix}$ where each row represents the reporting distribution for each private answer. In this case, people's strategic behaviors change the report distribution to the previous example and "surprisingly popular" will not work here. 
\end{example}

Thus, a fundamental problem of the ``surprisingly popular'' method is that though it is more robust than plurality vote method, it is not fully robust to people's strategies. 

To this end, we can naturally ask for an information aggregation method that is robust to people's strategies, that is, regardless of people's strategies, the aggregation result will not change. 

However, it may ask too much since people can permute their answers and no method is robust to this permutation strategy. Thus we change the requirement to the aggregation result will not change up to a permutation and use side information (e.g. ground truth of a small set of questions) to determine the permutation.

\begin{figure}
\center
	\includegraphics[width=9cm]{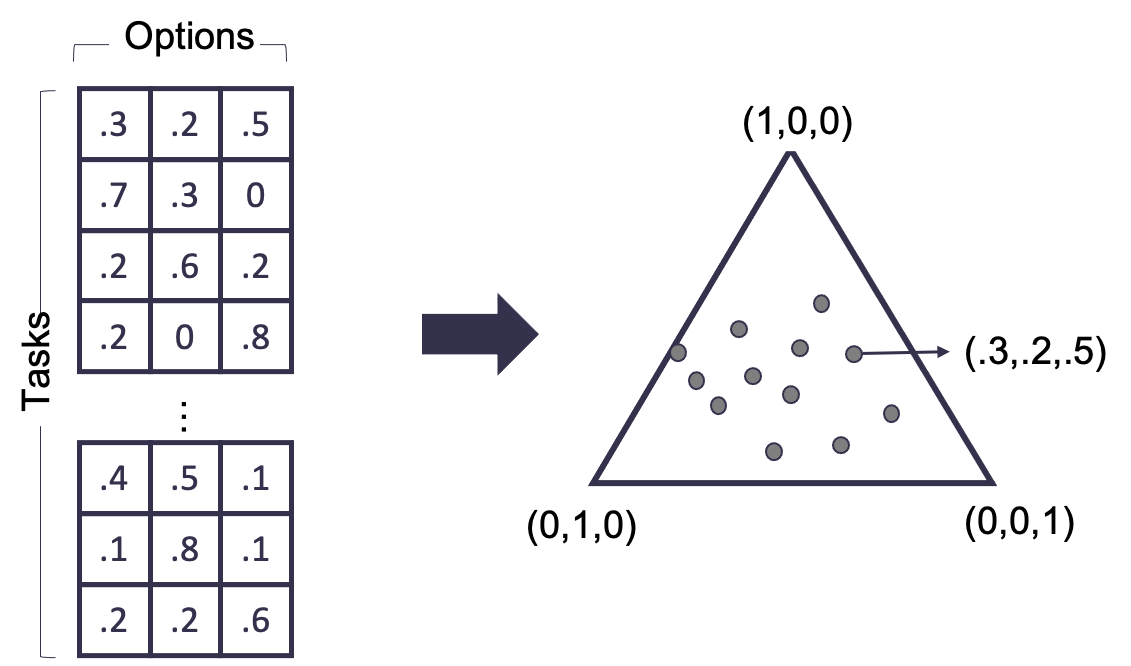}
	\caption{Reduction to clustering: we map each task to a data point that represents the task's collected feedback (e.g. 30\% ``good'', 20\% ``so so'', 50\% ``bad'') and then clustering the data points to assign each task a final answer.}\label{fig:reduction}
\end{figure}

It turns out that when we have a large number of people, and a small set of a priori similar tasks (each task is a multi-choice question), the answer is yes. A key conceptual idea is to reduce information aggregation without ground truth to a clustering/unsupervised learning problem. That is, we can use a datapoint to represents a task's feedback. Assigning answers to the tasks can be understood as clustering the data points (Figure~\ref{fig:reduction}). Recall the previous example, when people perform a strategy to their reports, all feedback are transformed by an affine transformation \footnote{It holds even if people adopt different strategies as long as each task is performed by a large number random people.}. Therefore, designing a robust information aggregation method is equivalent to design an affine-invariant clustering method. We then propose an affine-invariant clustering method, Determinant MaxImization (DMI) clustering, and apply it into the information elicitation without verification. 

DMI-clustering is the technical highlight in this paper. Previous affine-invariant clustering work \cite{begelfor2006affine,fitzgibbon2002affine} either has a different definition for affine-invariance\footnote{They require that items that are same up to affine transformations (e.g. images of the same chair observed in different angles) should be clustered into the same cluster. In our setting, we require the clustering result remains the same when all items are transformed by the same affine transformation.} from this work or has an underlying probabilistic model assumption and requires sufficiently number of data points \cite{brubaker2008isotropic,huang2020affine}. In contrast, DMI-clustering is a non-parametric affine-invariant method that applies to any finite number of data points and does not require any probabilistic model assumption. DMI-clustering provides a relatively affine-invariant evaluation to each clustering, the volume of the simplex consisting of each cluster mean multiplying the product of the cluster sizes, and aims to find the clustering with the highest evaluation. 

When we apply DMI-clustering into information elicitation, the dimension of the data points is the number of options, which can be seen as a constant. In this case, we show that there exists a polynomial time algorithm for DMI-clustering. In general, we present a simple local-maximal searching heuristic for DMI-clustering, k-cofactors, which is similar to Lloyd's algorithm \cite{lloyd1982least} for k-means. We illustrate the geometric interpretation of DMI-clustering and k-cofactors for 2d points in Figure~\ref{fig:dmigoal}. 

To apply DMI-clustering to multi-task information elicitation, we can normalize each question's feedback as a data point (e.g. (46\% A, 44\% B, 10\% C)). Given multiple questions' feedback, we can apply DMI-clustering to obtain the final answer\footnote{To determine the cluster's label, we can use side information like ground-truth of a small set of questions.}. 

We also compare DMI-clustering with the surprisingly popular method. Interestingly, we find that they are equivalent for binary-question and in general, DMI-clustering can be understood as a ``rotated'' surprisingly popular method. 

\begin{figure}
	\includegraphics[width=16cm]{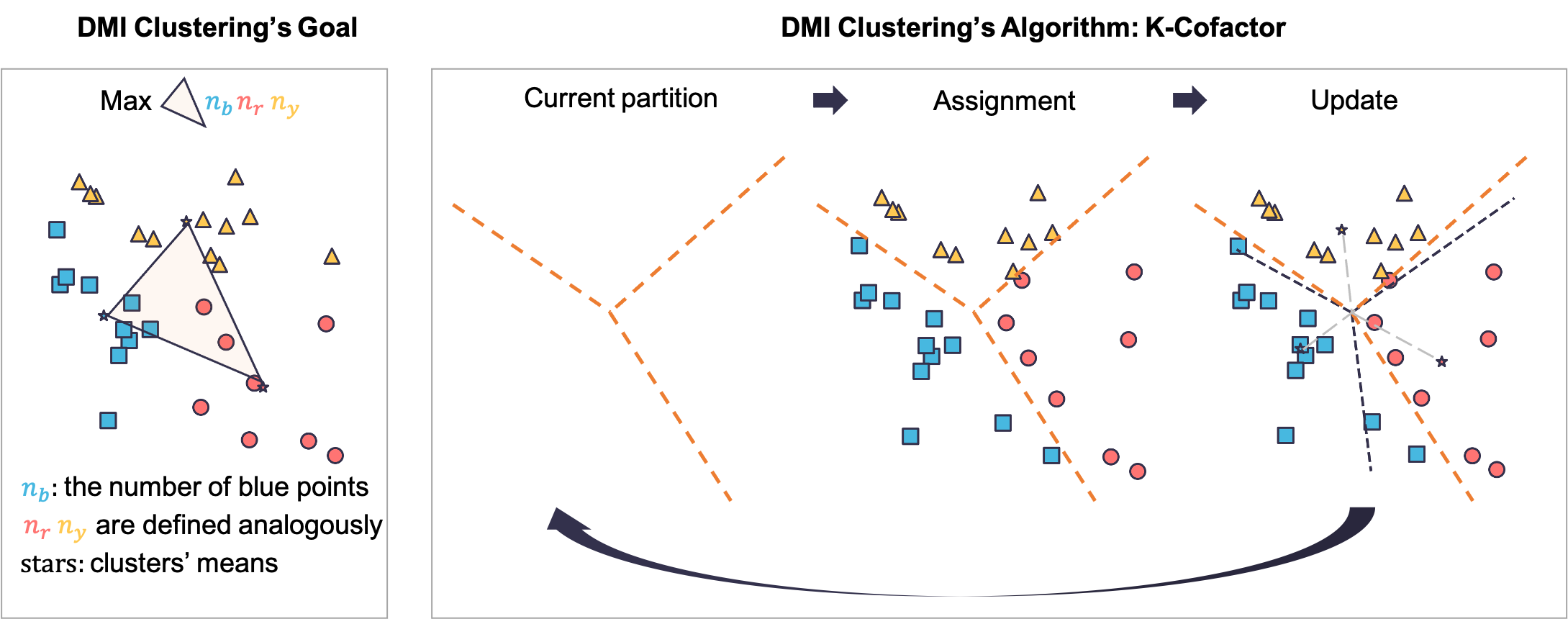}
	\caption{Geometric interpretation of DMI-clustering's goal and algorithm in 2d. Left: DMI-clustering aims to maximize the area of the triangle that consists of the clusters' means, multiplying the product of the cluster sizes. Right: we can update the current partition (orange dashed lines) by 1) assigning the points into clusters based on the current partition; 2) extending the line segmentations of each cluster's mean (star) and all points' mean (grey dashed lines) to obtain the new partition (black dashed lines). When we apply DMI-clustering to multi-task information elicitation, a point here represents a task's collected feedback.}\label{fig:dmigoal}
\end{figure}

In addition to the multi-task setting, it's also very useful to think from a clustering view in the single-task setting where we only have a single multi-choice question to people. The ``surprisingly popular'' method can be seen as a clustering method which only employs the first moment. To obtain the first moment, we can ask people's second moment information (e.g. what's your prediction for other people's answer) and reconstruct the first moment using the second moment~\cite{prelec2017solution}. From a clustering view, it's very natural to apply spectral method, which uses the second moment information, to design a new information aggregation method in the single-task case. If we can use a hypothetical question to elicit a higher-order-moment information, we can apply higher-order-moment clustering method for single-task setting.

\paragraph{Our results}  We connect information elicitation without ground truth with clustering, and 
\begin{description}
	\item [Multi-task \& Affine-invariance] we propose an affine-invariant clustering method and apply it to propose a multi-task information elicitation mechanism that 1) aggregates information in a way that is robust to people's strategies when the number of agents is infinite, 2) reduces the payment variance in some scenarios while still preserves previous multi-task mechanisms' incentive properties;
	\item [Single-task \& Methods of moments] we apply method of moments clustering method in the single-task setting. With only the second-moment information, we apply spectral method to propose a new aggregation mechanism which is better than the ``surprisingly popular'' method in some scenarios. 
\end{description}

The main conceptual contribution of this work is the connection between clustering and information elicitation without ground truth. This connection not only brings clustering techniques to information elicitation without verification, but also brings attention to the clustering setting where the input data point, and the property of the data (e.g. moment information) can be elicited from the crowds. We hope this connection can generate more results in the future for information elicitation/aggregation without ground truth in the future.




\subsection{Related Work}

Information elicitation without ground truth (peer prediction) is proposed by \citet{MRZ05}. The core problem is how to incentivize people to report high-quality answer even if the answer cannot be verified. The key idea is to reward each agent based on her report and her peer's reports. A series of work \cite{prelec2004bayesian,dasgupta2013crowdsourced, 2016arXiv160303151S,Kong:2019:ITF:3309879.3296670,liuchen,kong2020dominantly} have focused on this problem and develop reward schemes in varies of settings. However, in many scenarios, we want to not only  incentivize high-quality answer but also make an aggregation of the answers. Thus, Some of information elicitation work also explore how to aggregate information without ground truth and there are two approaches here. 

One approach is to use the plurality of high-payment agents' answer. \citet{liuchen} employ this way to aggregate people's reports based on various of payment schemes. However, incentive-compatible payment schemes only guarantees high-quality answer receives high payment in expectation. If the payment has a high variance, which happens especially when the number of tasks is small, high-payment agent may not provide high quality answers. Thus, here each agent needs to perform a sufficiently number of tasks to reduce the payment variance. A large literature of extracting wisdom from crowd also uses the idea that first estimate the crowd's quality and then use the estimated quality to aggregate their opinions (e.g. (e.g. \cite{budescu2015identifying, weller2007cultural})). The quality can either be estimated from previous performance the crowds or from a EM framework proposed by \citet{dawid1979maximum}. Though the EM framework based estimation does not requires historical data, it still requires each person to answer a sufficient amount questions to get a good estimation. Moreover, it usually requires an underlying probabilistic model: there exists a ground truth and conditioning on the ground truth, people's reports are independent.


Another approach is to directly evaluate the answer itself. A popular method is the ``surprisingly popular'' method. \citet{prelec2017solution} uses this method in the single-task setting by eliciting people's predictions for other people's answer and reconstructing the prior using the predictions. In the setting we have already known the prior, this method can be directly implemented. However, like we mentioned in introduction this method requires an additional assumption for the prior. 

More importantly, none of these aggregation approaches is robust to people's strategies. Though for Determinant based Mutual Information (DMI)-Mechanism \cite{kong2020dominantly}, the \emph{expected} payment is relatively robust to other people's strategies, we still need each agent to perform a sufficiently number of tasks to reduce the payment variance. However, the setting when we have a large number of agents and each agent performs a small number of tasks may be more practical. The multi-task setting in this work focuses on this more practical setting and proposes an aggregation method, DMI-clustering, which is robust to people's strategies. DMI-clustering is inspired by DMI-Mechanism \cite{kong2020dominantly}. Unlike DMI-Mechanism based aggregation approach uses the plurality of high-payment agents' answer, DMI-clustering learns the answers which has the highest determinant mutual information with the collection of people's answers. When we have a large number of participants, the learning process is independent of their strategies, even if each participant only performs a small number of tasks. \citet{xu2019l_dmi} also employ the property of Determinant based Mutual Information to design a robust loss function. Unlike this work, \citet{xu2019l_dmi} focus on the noisy labels setting rather than the unsupervised clustering setting.

DMI-clustering is a new general clustering method. The most important property of DMI-clustering is affine-invariance. Affine-invariant clustering is well studied in image classification \cite{werman1995similarity,fitzgibbon2002affine,begelfor2006affine}. However, they have a different definition for affine-invariance from this work. In their definition, images that are about the same item observed in different angles should be classified into the same cluster. 

Unlike the above work about affine-invariance, we focus on the setting where all data points are transformed by the same affine transformation and we want the clustering for these data points remains invariant to the transformation. \citet{brubaker2008isotropic, huang2020affine} also use this definition and propose an affine-invariant clustering method. However, \citet{brubaker2008isotropic} focus on the application in mixture model and require the knowledge of the number of components and sufficient amount of sample data points. \citet{huang2020affine} have an underlying probabilistic model and develop a maximum likelihood estimator based algorithm. In contrast, DMI-clustering is a non-parametric affine-invariant method that applies to any finite number of data points and does not require any probabilistic model assumption. Moreover, DMI-clustering provides a relatively affine-invariant evaluation to each clustering. Thus, when we have side information that gives a constraint for the set of clustering (e.g. we need to pick one from clustering I or clustering II), DMI-clustering method is still affine-invariant. 

DMI-clustering is more similar to k-means clustering. Like k-means, it has a natural geometric interpretation and a simple local-maxima searching heuristic that uses each cluster's mean. One difference from k-means is that DMI-clustering can not pick any $k$ it wants since it needs to satisfy the affine-invariance\footnote{If we have $d$ linearly independent $d$-dimensional data points, k-means method can pick any $k$. However, DMI-clustering must pick $k=d$ and make each individual data point a cluster by itself since these data points can be any other linearly independent $d$-dimensional data points by a proper affine-transformation. }. 

The definition of DMI-clustering uses determinant. Log-determinant divergence \cite{cichocki2015log}'s definition also uses determinant and is affine-invariant in its definition. However, still, the affine-invariance of log-determinant divergence is different from this work. It refers to the property that the divergence between two positive definite matrices is invariant to affine transformations. The geometric interpretation of DMI-clustering is maximizing the volume of simplex constituted by cluster's means, multiplying the product of cluster sizes. \citet{thurau2010yes} also connect simplex volume maximization with clustering. However, their problem is very different from ours and irrelevant to affine-invariance. They aim to design an efficient algorithm for matrix factorization by greedily picking the basis vectors such that the simplex formed by these basis vectors has the maximal volume. 

\section{Determinant MaxImization (DMI) Clustering}

In this section, we will introduce a new clustering method, DMI-clustering, that is based on determinant maximization. We will show that this method is invariant to any affine transformation which is performed on all data points due to the multiplicative property of determinant. We will also show that this clustering method will generate a clustering which has a natural geometric interpretation: there will be a representative vector for each cluster and each data point is mapped to the cluster whose representative vector has the highest inner-product with the data point. 

Moreover, we will introduce a heuristic algorithm, k-cofactors, which is similar to Lloyd's algorithm for k-means. When the data points are in 2-dimension, we visualize the algorithm and introduce an interesting geometric way to perform the algorithm. All synthetic data used for demonstration is attached in appendix. We will also show that there exists a polynomial time algorithm for DMI-clustering when $d$ is a constant. We first introduce the clustering method.

\begin{algorithm}[H]
\SetAlgoLined
\DontPrintSemicolon
\KwInput{$\mathbf{A}\in\mathbb{R}^{n\times d}$ \tcc*{clustering the rows of $\mathbf{A}$}} 
\KwOutput{k,$\mathbf{C}^*\in\{0,1\}^{n\times k}$, $L$, $\mathbf{D}^*\in\mathbb{R}^{k\times k}$\tcc*{$k$ clusters, map row $i$ to cluster $c$ if $C^*_{ic}=1$}}
Add column $(1,1,\cdots,1)^{\top}$ to $\mathbf{A}$, i.e., $\mathbf{A}^{+1}=[\mathbf{A} \quad \mathbf{1}]$\;
Find the rank $k$ of $\mathbf{A}^{+1}$\;
Pick $k$ linearly independent columns of $\mathbf{A}^{+1}$ to construct matrix $\tilde{\mathbf{A}}$ arbitrarily\;
$L=$ the set of picked column indices \;
Let $\mathcal{C}$ be $\{\mathbf{C}\in\{0,1\}^{n\times k}|\forall i,\sum_{c}C_{ic}= 1\}$\footnote{We can pick other constraint set $\mathcal{C}$ here, for example, we can use $\mathcal{C}^{soft}:=\{\mathbf{C}\in[0,1]^{n\times k}|\forall i,\sum_{c}C_{ic}= 1\}$}\; 
Find $\mathbf{C}^*\in\argmax_{\mathbf{C}\in\mathcal{C}} \det[\mathbf{C}^{\top}\tilde{\mathbf{A}}]$\tcc*{we call $|\det[\mathbf{C}^{\top}\tilde{\mathbf{A}}]|$ the DMI-score of $\mathbf{C}$, $\dmis(\mathbf{C},\mathbf{A})$}
$\mathbf{D}^*=(\mathbf{C}^{*\top} \tilde{\mathbf{A}})^{-1}$
  \caption{\textbf{DMI-clustering} $\dmic(\mathbf{A})$} \label{alg:dmic}
  
 \end{algorithm}
 
The core of DMI-clustering is an evaluation metric for each clustering, DMI-score. We will show that intuitively, DMI-score is the volume of the simplex formed by clusters' means, multiplying the product of cluster sizes. DMI-clustering aims to find the clustering that has the highest DMI-score. A naive attempt to solve the optimization problem in DMI-clustering is to enumerate all possible $\mathbf{C}$. However, it needs an exponential running time in $n$, even when $d$ is a constant. To have a better algorithm, in addition to show that DMI-clustering is affine-invariant, we also analyze the clustering result of DMI-clustering and find that it is linearly-partitioned. This linearly-partitioned property will directly induce a polynomial time algorithm when $d$ is a constant and a simple local maximal-searching heuristics in general. We will use the following notation to describe the clustering result of DMI-clustering. 

\begin{definition}[Index of row's max]
	For matrix $\mathbf{M}\in\mathbb{R}^{n\times k}$, we define $\idxmax(\mathbf{M})\in\mathbb{R}^{n\times k}$ as a matrix such that for all $i$, $\idxmax(\mathbf{M})$'s $i^{th}$ row is a one-hot vector that indicates the index of the largest entry of $\mathbf{M}$'s $i^{th}$ row. When there is a tie, we define $\idxmax(\mathbf{M})$ as a set of all matrices whose rows indicate the index of the largest entry of $\mathbf{M}$'s rows. For example, \[\idxmax\left({\begin{bmatrix}
	.3 & .4 & .1\\
	0 & -1 & .1
\end{bmatrix}}\right)=\begin{bmatrix}
	0 & 1 & 0\\
	0 & 0 & 1
\end{bmatrix}.\]

\end{definition}

 We start to introduce the theoretic properties of DMI-clustering and visualize them in Figure~\ref{fig:prop}. 
\begin{figure}
	\includegraphics[width=17cm]{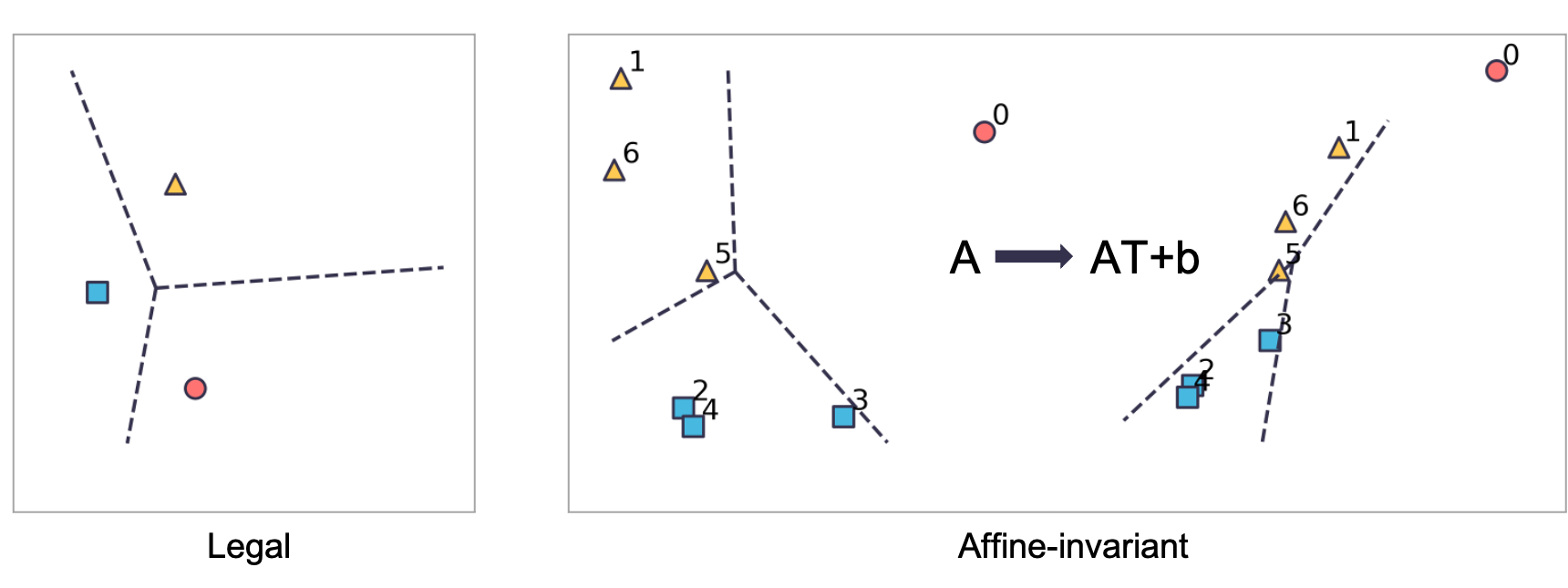}
	\caption{Geometric interpretation of DMI-clustering's properties: 1) legal: when the set of points are $k\leq d+1$ distinct points (e.g. $k=3,d=2$, $\mathbf{A}^{\top}=[\mathbf{a}_1^{\top},\mathbf{a}_1^{\top},\mathbf{a}_2^{\top},\mathbf{a}_3^{\top},\mathbf{a}_2^{\top}]$), each of the distinct points will form a cluster; 2) affine-invariant: when we perform the same affine-transformation to all points, the clustering result will not change; 3) Linearly-partitioned: for all $i\in[n]$, if row $i$ is mapped to cluster $c$, then the row vector $\tilde{\mathbf{A}}_i$ has the highest inner product to column vector $\mathbf{D}_c$ among all column vectors of $\mathbf{D}^*$. Thus, $\mathbf{D}^*$ determines a partition (the black dashed line) and the mean of all points lies in the intersection of the partition lines. \label{fig:prop}
  }
\end{figure}


\begin{theorem}\label{thm:main}
	DMI-clustering is 	
	\begin{description}
		\item [Legal] when there exists full rank $\mathbf{C}^*\in\mathcal{C}$ such that $\mathbf{A}=\mathbf{C}^{*\top}\mathbf{B}$ where $\mathbf{B}$ is invertible, for both hard constraint $\mathcal{C}=\{\mathbf{C}\in\{0,1\}^{n\times k}|\forall i,\sum_{c}C_{ic}= 1\}$ and soft constraint $\mathcal{C}^{soft}$, $\dmic(\mathbf{A})$ equals $\mathbf{C}^*$ (up to column permutation).  
		\item [Affine-invariant] the DMI-score of $\mathbf{C}$ is relatively invariant to any invertible affine transformation of the data points, i.e., for all invertible $\mathbf{T}\in\mathbb{R}^{d\times d}$ and $\mathbf{b}\in\mathbb{R}^{1\times b}$, for all $\mathbf{C}_1,\mathbf{C}_2$ which have non-zero DMI-score regarding $\mathbf{A}$, 
		\[\frac{\dmis(\mathbf{C}_1,\mathbf{A})}{\dmis(\mathbf{C}_2,\mathbf{A})}=\frac{\dmis(\mathbf{C}_1,\mathbf{A}\mathbf{T}+\mathbf{b})}{\dmis(\mathbf{C}_2,\mathbf{A}\mathbf{T}+\mathbf{b})}\footnote{When we add a row vector to a matrix, we add the vector to all rows of the matrix.} \] thus, for any constraint set $\mathcal{C}$, $\dmic(\mathbf{A}\mathbf{T}+\mathbf{b})$ is equivalent to $\dmic(\mathbf{A})$ in the sense that any solution of $\dmic(\mathbf{A}\mathbf{T}+\mathbf{b})$ is also a solution of $\dmic(\mathbf{A})$, vice versa;
		\item [Linearly-partitioned] when we maximize over $\mathcal{C}=\{\mathbf{C}\in\{0,1\}^{n\times k}|\forall i,\sum_{c}C_{ic}= 1\}$, $\mathbf{C}^*$ is a local maximal of $\det[\mathbf{C}^{\top}\tilde{\mathbf{A}}]$ if and only if $\mathbf{C}^*\in \idxmax(\tilde{\mathbf{A}}\mathbf{D}^*)$ where $\mathbf{D}^*=(\mathbf{C}^{*\top} \tilde{\mathbf{A}})^{-1}$.
	\end{description}
	Geometrically, the average of $\tilde{\mathbf{A}}$'s row vectors is the intersection of all partitions, i.e., $\bar{\tilde{\mathbf{a}}} *\mathbf{D}^*\propto \mathbf{1}$ and if $\mathbf{A}-\bar{\mathbf{a}}$'s rank is d, the optimization goal is equivalent with maximizing the volume of the simplex consist of each cluster's mean, multiplying the product of the cluster sizes. 	
\end{theorem}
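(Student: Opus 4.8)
The plan is to treat the three asserted properties essentially separately, since each relies on a different elementary fact about determinants, and then close with the geometric reinterpretation.

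For \textbf{Affine-invariance}, I would start from the observation that appending a column of ones is what makes an affine map act linearly: if $\mathbf{A}' = \mathbf{A}\mathbf{T} + \mathbf{b}$ with $\mathbf{T}$ invertible, then $\mathbf{A}'^{+1} = \mathbf{A}^{+1} \mathbf{T}^+$ for a suitable $(d+1)\times(d+1)$ block matrix $\mathbf{T}^+ = \begin{bmatrix} \mathbf{T} & \mathbf{0} \\ \mathbf{b} & 1 \end{bmatrix}$, which is invertible since $\det \mathbf{T}^+ = \det \mathbf{T} \neq 0$. The subtlety is that $\tilde{\mathbf{A}}$ is built by picking $k$ linearly independent columns of $\mathbf{A}^{+1}$, and the choice of columns for $\mathbf{A}'$ may differ; but since the column space is preserved under right-multiplication by an invertible matrix, the rank $k$ is unchanged, and any full-rank column selection $\tilde{\mathbf{A}}'$ satisfies $\tilde{\mathbf{A}}' = \tilde{\mathbf{A}} \mathbf{S}$ for some invertible $k\times k$ matrix $\mathbf{S}$ (change of basis within the common column space, composed with the restriction of $\mathbf{T}^+$). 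Then $\det[\mathbf{C}^\top \tilde{\mathbf{A}}'] = \det[\mathbf{C}^\top \tilde{\mathbf{A}}] \cdot \det \mathbf{S}$, so the ratio of DMI-scores of any two $\mathbf{C}$'s is preserved, the $\det\mathbf{S}$ factor cancelling. This immediately gives that the $\argmax$ set is unchanged for any $\mathcal{C}$. The one genuinely fiddly point is verifying the $\tilde{\mathbf{A}}' = \tilde{\mathbf{A}}\mathbf{S}$ claim cleanly — I would phrase it as: both $\tilde{\mathbf{A}}$ and $\tilde{\mathbf{A}}'$ are $n\times k$ matrices whose columns form bases of the (shared) column space of $\mathbf{A}^{+1}$, hence are related by an invertible change of coordinates.

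For \textbf{Legal}, suppose $\mathbf{A} = \mathbf{C}^{*\top}\mathbf{B}$ with $\mathbf{C}^*\in\mathcal{C}$ full rank (so $k$ columns, $\mathbf{C}^{*\top}$ is $k\times n$... wait, $\mathbf{C}^*$ is $n\times k$, so $\mathbf{C}^{*\top}$ is $k \times n$ and the product is $k\times d$; I would first reconcile this with the statement, presumably $\mathbf{A}$ here is implicitly the cluster-mean form or the factorization is stated up to the column-of-ones augmentation — I would track the dimensions carefully and note $\mathbf{B}$ being "invertible" forces $d = k$ or $d+1=k$ after augmentation). The key computation is that for this particular $\mathbf{C}^*$, $\det[\mathbf{C}^{*\top}\tilde{\mathbf{A}}]$ is a nonzero product of $\det\mathbf{C}^{*\top}$-type and $\det\mathbf{B}$-type factors, while for any other $\mathbf{C}\in\mathcal{C}$, I would bound $|\det[\mathbf{C}^\top\tilde{\mathbf{A}}]|$ from above by the same quantity, with equality only at column permutations of $\mathbf{C}^*$. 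The natural tool is multilinearity of the determinant in the columns of $\mathbf{C}$ (in the soft case $\mathbf{C}\in[0,1]^{n\times k}$) reducing the maximum to a vertex of the polytope $\mathcal{C}^{soft}$, i.e., to the hard case, plus a uniqueness argument from the full-rank factorization. I expect this to be the \emph{main obstacle}: handling the soft constraint and the tie/permutation bookkeeping rigorously, since the determinant is not a norm and one must argue the extreme points carefully.

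For \textbf{Linearly-partitioned}, the argument is a discrete local-optimality / exchange argument. Fix $\mathbf{C}^*$ and $\mathbf{D}^* = (\mathbf{C}^{*\top}\tilde{\mathbf{A}})^{-1}$. Moving a single row $i$ from its current cluster $c$ to cluster $c'$ changes $\mathbf{C}^{*\top}\tilde{\mathbf{A}}$ by a rank-one update, and by the matrix determinant lemma the new determinant equals the old one times a factor that is linear in the entries $(\tilde{\mathbf{A}}\mathbf{D}^*)_{i\cdot}$; specifically the relevant comparison is between $(\tilde{\mathbf{A}}\mathbf{D}^*)_{ic}$ and $(\tilde{\mathbf{A}}\mathbf{D}^*)_{ic'}$. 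So no single-row move increases $|\det|$ iff for every $i$, its assigned cluster achieves the max of row $i$ of $\tilde{\mathbf{A}}\mathbf{D}^*$ — that is exactly $\mathbf{C}^*\in\idxmax(\tilde{\mathbf{A}}\mathbf{D}^*)$. I would spell out the determinant-lemma step (the cofactor/adjugate identity $\det[\mathbf{C}^\top\tilde{\mathbf{A}}]\cdot\mathbf{D}^* = \operatorname{adj}(\mathbf{C}^{*\top}\tilde{\mathbf{A}})$ is what makes the cofactors appear, justifying the name ``k-cofactors'') and note the converse (local max $\Rightarrow$ idxmax condition) follows from the same inequality read in reverse.

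Finally, for the \textbf{geometric reinterpretation}: $\bar{\tilde{\mathbf{a}}}\mathbf{D}^* \propto \mathbf{1}$ is just the statement that the all-ones vector is in the column space selected (one of the columns of $\tilde{\mathbf A}$, after the augmentation, is $\mathbf 1$, or a combination giving $\mathbf 1$), combined with $\mathbf{C}^{*\top}\mathbf{1} = (\text{cluster sizes})$ and $\mathbf{D}^* = (\mathbf{C}^{*\top}\tilde{\mathbf{A}})^{-1}$; averaging rows and using that each row of $\mathbf{C}^*$ sums to one gives $\bar{\tilde{\mathbf a}} = \frac1n \mathbf 1^\top \mathbf C^* \tilde{\mathbf A} \cdot(\text{something})$, so $\bar{\tilde{\mathbf a}}\mathbf D^*$ is proportional to $\mathbf 1^\top \mathbf C^* $ up to normalization — I would write the one-line linear algebra. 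For the volume claim, when $\mathrm{Rank}(\mathbf A - \bar{\mathbf a}) = d$ we may take $k = d+1$ and $\tilde{\mathbf A} = \mathbf A^{+1}$ itself; then $\mathbf C^{*\top}\tilde{\mathbf A}$ has rows $(\text{size}_c \cdot \bar{\mathbf a}_c,\ \text{size}_c)$ where $\bar{\mathbf a}_c$ is the mean of cluster $c$, and $\det$ of this $(d+1)\times(d+1)$ matrix is exactly $\prod_c \text{size}_c$ times the determinant of the matrix with rows $(\bar{\mathbf a}_c, 1)$, which is $(d+1)!\cdot\mathrm{Vol}$ of the simplex on the cluster means. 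I would cite this as the standard simplex-volume determinant formula and conclude.
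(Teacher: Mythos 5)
Your treatment of affine-invariance, of the linearly-partitioned property, and of the geometric statements essentially coincides with the paper's proof. In particular, your matrix-determinant-lemma step, $\det\bigl(\mathbf{M}+(\mathbf{e}_{c'}-\mathbf{e}_{c_i^*})\tilde{\mathbf{a}}_i\bigr)=\det(\mathbf{M})\bigl(1+(\tilde{\mathbf{A}}\mathbf{D}^*)_{ic'}-(\tilde{\mathbf{A}}\mathbf{D}^*)_{ic_i^*}\bigr)$ with $\mathbf{M}=\mathbf{C}^{*\top}\tilde{\mathbf{A}}$, is exactly the paper's multilinearity-plus-Laplace/cofactor computation in compressed form, and it cleanly gives both directions of the ``iff''; just make explicit, as the paper does, that $\det[\mathbf{C}^{*\top}\tilde{\mathbf{A}}]>0$ at the local maximum so the inequality is read in the right direction (and the simplex constant is $d!$, not $(d+1)!$, though only proportionality is claimed).

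The genuine gap is in \textbf{Legal}, and you flag it yourself: you never supply the argument that no other $\mathbf{C}$, hard or soft, can match the score of $\mathbf{C}^*$. Your plan --- multilinearity to push the soft maximum to a vertex of $\mathcal{C}^{soft}$, then ``a uniqueness argument from the full-rank factorization'' --- stops exactly where the work begins: after reducing to hard assignments you still must show $|\det[\mathbf{C}^{\top}\mathbf{C}^*]|\le \prod_c n^*_c$ with equality only for column permutations of $\mathbf{C}^*$, and you name no tool for this. The paper's idea is the following. Reading the hypothesis as saying each row of $\mathbf{A}$ is one of $k$ distinct points (rows of an invertible $\mathbf{B}$), so that $[\mathbf{A}\;\;\mathbf{1}]=\mathbf{C}^*[\mathbf{B}\;\;\mathbf{1}]$ --- this also resolves your dimension worry --- affine-invariance lets one take $\tilde{\mathbf{A}}=\mathbf{C}^*$, i.e., reduce to maximizing $\det[\mathbf{C}^{\top}\mathbf{C}^*]$. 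Then column-normalize $\mathbf{C}^*$ to $\ncol(\mathbf{C}^*)$ (a right-multiplication by a positive diagonal matrix, which only rescales the objective); both $\mathbf{C}^{\top}$ (rows of $\mathbf{C}$ sum to $1$) and $\ncol(\mathbf{C}^*)$ are column-stochastic, hence $\mathbf{C}^{\top}\ncol(\mathbf{C}^*)$ is a $k\times k$ column-stochastic matrix, whose determinant is at most $1$ in absolute value with equality iff it is a permutation matrix, while $\mathbf{C}^{*\top}\ncol(\mathbf{C}^*)=\mathbf{I}$ attains $1$. This stochastic-matrix bound handles the soft constraint directly, making your vertex-reduction step unnecessary; without it (or an equivalent inequality), the Legal claim remains unproved in your proposal.
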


For simple one-dimensional case, we show that \dmi-clustering will partition the numbers based on the mean (Corollary~\ref{coro:1d}). For other dimensions, the linearly-partitioned property directly induce an algorithm (Algorithm~\ref{alg:kco}) to find the local optimum. This algorithm is very similar to Lloyd's algorithm for k-means: 1. assignment: assign the points according to the current partition; 2. update: update the partition using the current clusters. We call this algorithm k-cofactors since the update step employs the cofactors of the matrix which consists of the current cluster means. In the 2d case, we will translate the operations in k-cofactors into natural geometric operations (Algorithm~\ref{alg:2d}, Figure~\ref{fig:kco2d}). In the end of this section, we will also show that the linearly-partitioned property induces a polynomial algorithm when the dimension $d$ is a constant, based on properties of shatter functions. 

\begin{corollary}[1d points]

	For n real numbers $a_1,a_2,\cdots,a_n$ where there exists $i\neq j$ such that $a_i\neq a_j$, \dmi-clustering will partition the numbers into two clusters: above the average and below the average.
	\end{corollary}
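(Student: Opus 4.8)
The plan is to specialize the definition of \dmic\ directly to the one-dimensional case and reduce the optimization to an elementary subset problem. First I would form $\mathbf{A}^{+1}=[\mathbf{A}\ \ \mathbf{1}]\in\mathbb{R}^{n\times 2}$, where $\mathbf{A}=(a_1,\dots,a_n)^{\top}$. Since by hypothesis the $a_i$ are not all equal, the column $\mathbf{A}$ is not proportional to $\mathbf{1}$, so $\rk(\mathbf{A}^{+1})=2$; hence $k=2$ and $\tilde{\mathbf{A}}=\mathbf{A}^{+1}$ (the choice of which two columns to keep is forced). In particular \dmic\ always outputs exactly two clusters here.

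Next, observe that any $\mathbf{C}\in\mathcal{C}$ is the indicator matrix of a partition of $[n]$ into two sets $S_1,S_2$. Writing $n_j=|S_j|$, $T_j=\sum_{i\in S_j}a_i$, and $\bar a=\frac1n\sum_i a_i$, a direct computation gives
\[
\mathbf{C}^{\top}\tilde{\mathbf{A}}=\begin{bmatrix}T_1 & n_1\\ T_2 & n_2\end{bmatrix},\qquad
\det[\mathbf{C}^{\top}\tilde{\mathbf{A}}]=T_1 n_2-T_2 n_1=n\sum_{i\in S_1}(a_i-\bar a),
\]
where the last equality uses $n_1+n_2=n$ and $T_1+T_2=n\bar a$. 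Thus maximizing the DMI-score $|\det[\mathbf{C}^{\top}\tilde{\mathbf{A}}]|$ is the same as maximizing $\bigl|\sum_{i\in S_1}(a_i-\bar a)\bigr|$ over all subsets $S_1\subseteq[n]$.

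Finally I would solve this subset problem. The centered values $b_i:=a_i-\bar a$ sum to zero, so $\sum_{b_i>0}b_i=-\sum_{b_i<0}b_i=:M$, with $M>0$ strictly since the $a_i$ are not all equal. For any $S_1$, deleting an element with $b_i<0$ or adding an omitted element with $b_i>0$ strictly increases $\sum_{i\in S_1}b_i$, so the maximum value $M$ is attained exactly when $\{i:b_i>0\}\subseteq S_1\subseteq\{i:b_i\ge 0\}$, and symmetrically the minimum value $-M$ exactly when $\{i:b_i<0\}\subseteq S_1\subseteq\{i:b_i\le 0\}$. Hence the maximizers of the DMI-score are precisely the partitions separating the points strictly above the average from those strictly below it, with points equal to the average assignable to either cluster; since $M>0$ both clusters are nonempty and $\mathbf{C}^{*\top}\tilde{\mathbf{A}}$ is invertible, so $\mathbf{D}^*=(\mathbf{C}^{*\top}\tilde{\mathbf{A}})^{-1}$ is well defined, matching the algorithm's output. (Alternatively this can be read off the geometric interpretation in Theorem~\ref{thm:main}: in $1$d the simplex of the two cluster means has ``volume'' $|\mu_1-\mu_2|$ and $n_1n_2|\mu_1-\mu_2|=|T_1n_2-T_2n_1|$; but the direct computation is shorter.)

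The argument is essentially routine; the only points that need a little care are (i) confirming $\rk(\mathbf{A}^{+1})=2$ so that the output really consists of two clusters, and (ii) the tie-handling for points lying exactly at the mean, which is why the conclusion is naturally read up to how such points are assigned.
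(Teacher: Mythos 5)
Your proof is correct and follows essentially the same route as the paper: both reduce $\det[\mathbf{C}^{\top}\tilde{\mathbf{A}}]$ to a positive multiple of $\sum_{i\in S_1}(a_i-\bar a)$ and conclude that the maximizer separates the points above the average from those below. The only cosmetic difference is that you expand the $2\times 2$ determinant directly, whereas the paper first centers the data via affine-invariance and then uses row operations; your write-up also adds the minor (harmless) refinements of tie-handling at the mean and checking invertibility of $\mathbf{C}^{*\top}\tilde{\mathbf{A}}$.
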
\label{coro:1d}
\begin{proof}[Proof of Corollary~\ref{coro:1d}]
	Since \dmi-clustering is affine-invariant, we subtract the average from each number and define vector $\mathbf{v}:=(a_1-\bar{a},a_2-\bar{a},\cdots,a_n-\bar{a})$. 
	
	\begin{align*}
		\det[[\mathbf{p},\mathbf{1}-\mathbf{p}]^{\top}[\mathbf{v}\quad \mathbf{1}]]&=  \det[\begin{bmatrix}\mathbf{p}^{\top} \mathbf{v}& \mathbf{p}^{\top}\mathbf{1} \\(\mathbf{1}-\mathbf{p})^{\top}\mathbf{v} & (\mathbf{1}-\mathbf{p})^{\top}\mathbf{1}\\ \end{bmatrix}]\\ \tag{add first row the second row}
		& \propto \det[\begin{bmatrix}\mathbf{p}^{\top} \mathbf{v}& \mathbf{p}^{\top}\mathbf{1} \\\mathbf{1}^{\top}\mathbf{v} & \mathbf{1}^{\top}\mathbf{1}\\ \end{bmatrix}]\\ \tag{the sum of $\mathbf{v}$ is 0}
		& = \det[\begin{bmatrix}\mathbf{p}^{\top} \mathbf{v}& \mathbf{p}^{\top}\mathbf{1} \\0 & \mathbf{1}^{\top}\mathbf{1}\\ \end{bmatrix}]\\
		& \propto \mathbf{p}^{\top} \mathbf{v}
	\end{align*}
	
	To maximize $\mathbf{p}^{\top} \mathbf{v}$, the 0-1 vector $\mathbf{p}$ will pick all positive entries of $\mathbf{v}$. Thus, \dmi-clustering will partition the numbers into two clusters: above the average and below the average.
\end{proof}

 \begin{algorithm}[H]
\SetAlgoLined
\DontPrintSemicolon
\KwInput{$\tilde{\mathbf{A}}\in\mathbb{R}^{n\times k}$ } 
\KwOutput{$\mathbf{C}\in\{0,1\}^{n\times k}$}
Initialize $\mathbf{C}$\;
$\mathbf{D}:=(\mathbf{C}^{\top} \tilde{\mathbf{A}})^{-1}$\;
\While{$\mathbf{C}\notin \idxmax(\tilde{\mathbf{A}}\mathbf{D})$}
	{
	$\mathbf{C}\leftarrow \idxmax(\tilde{\mathbf{A}}\mathbf{D})$\footnote{Update $\mathbf{C}$'s row only if it does not indicate the index of the highest entry of $\tilde{\mathbf{A}}\mathbf{D}$'s row} \tcc*{assignment step}
	$\mathbf{D}:=(\mathbf{C}^{\top} \tilde{\mathbf{A}})^{-1}$ \tcc*{update step}
	}
\Return $\mathbf{C}$
  \caption{\textbf{k-cofactors}: finding local maximal of $\max_{\mathbf{C}\in\mathcal{C}} \det[\mathbf{C}^{\top}\tilde{\mathbf{A}}]$}\label{alg:kco}
\end{algorithm} 

\begin{figure}
	\includegraphics[width=15cm]{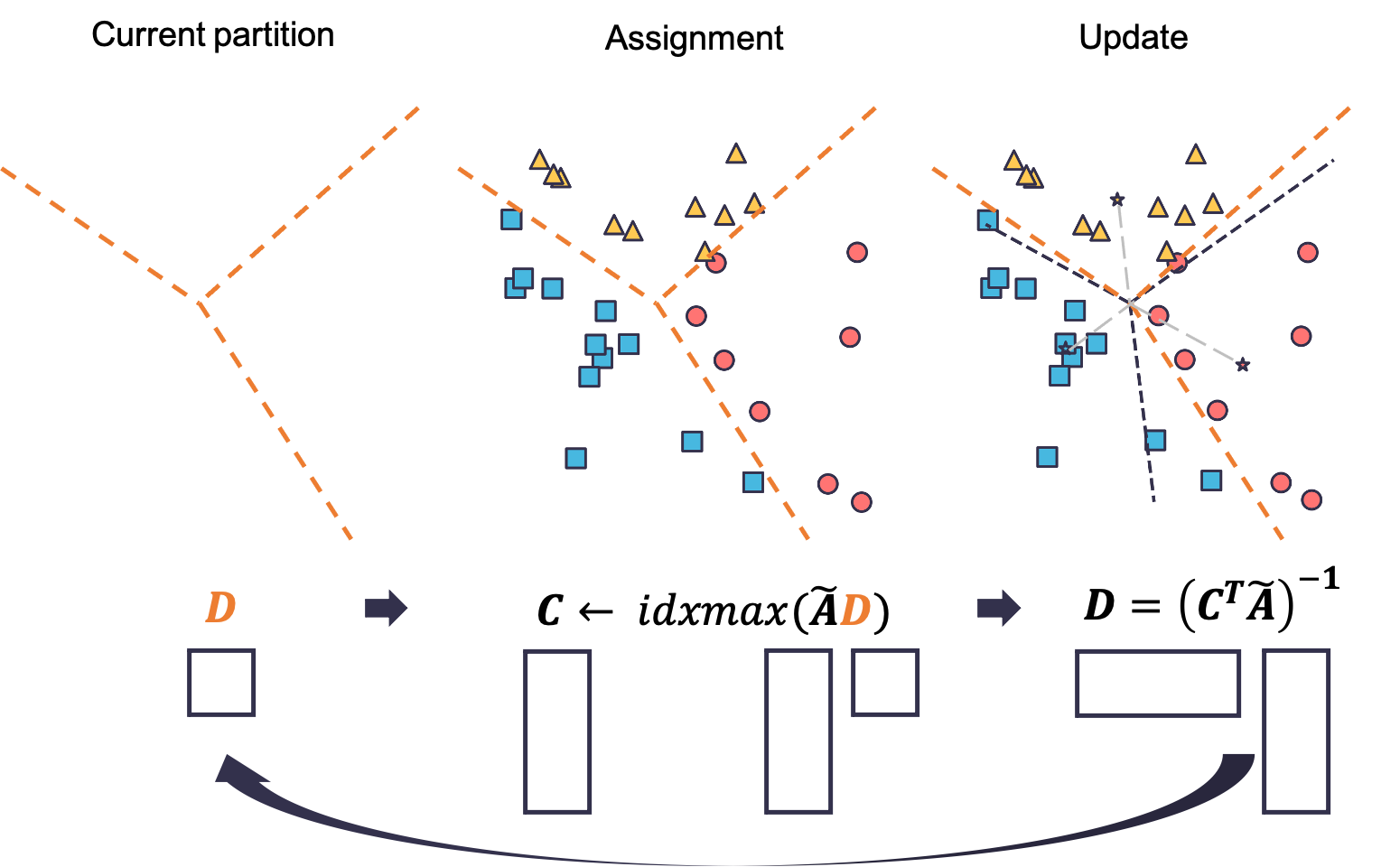}
	\caption{Demonstration of k-cofactors algorithm: the above figure demonstrates a running example for 2d points and shows the repeated steps during the process: left: the current $\mathbf{D}$ determines a partition (the orange dashed lines); middle (assignment step): points are clustered by the partition; right (update step): a new $\mathbf{D}$ is generated by the current clusters.  }\label{fig:kco2d}
\end{figure}

\begin{figure}
	\includegraphics[width=15cm]{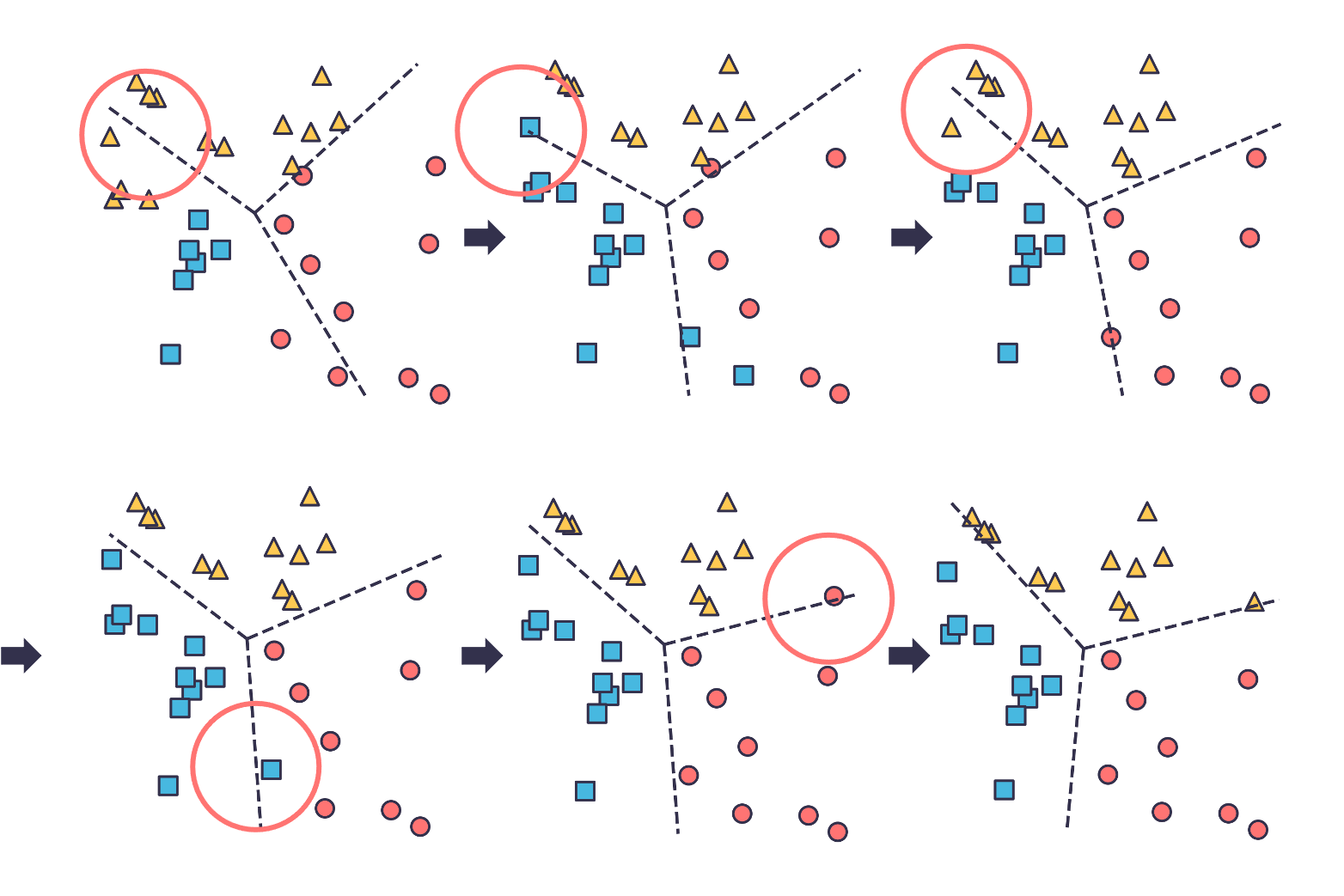}
	\caption{An empirical run of k-cofactors algorithm: the initial k clusters are created by $\mathbf{C}\leftarrow\idxmax(\tilde{\mathbf{A}}-\bar{\tilde{\mathbf{a}}})$. Each subfigure shows the current cluster $\mathbf{C}$ and its generated partition $\mathbf{D}=(\mathbf{C}^{\top}\tilde{A})^{-1}$ (the black dashed lines). Note that the process stops as long as the generated partition $\mathbf{D}=(\mathbf{C}^{\top}\tilde{A})^{-1}$ is consistent with the current cluster, i.e., $\mathbf{C}\notin \idxmax(\tilde{\mathbf{A}}\mathbf{D})$. The red circle marks the inconsistent part. }\label{fig:empirical}
\end{figure}

\begin{algorithm}[H]
\SetAlgoLined
\DontPrintSemicolon
\KwInput{$(\mathbf{a}_1,\mathbf{a}_2,\cdots,\mathbf{a}_n)$ \tcc*{$n$ 2d points}} 
\KwOutput{$\mathbf{C}\in\{0,1\}^{n\times 3}$ \tcc*{$3$ clusters, $\mathbf{a}_i$ belongs to cluster $c$ if $C_{ic}=1$}}
Initialize $\mathbf{C}$\;
Calculate mean $\bar{\mathbf{a}}$ of all points\tcc*{The center intersection point in Figure~\ref{fig:kco2d}}
Calculate mean $\bar{\mathbf{a}}_c$ of each cluster $c=1,2,3$ \tcc*{The stars in Figure~\ref{fig:kco2d}}
Draw partition line, starting from $\bar{\mathbf{a}}$, by extending line segment $[\bar{\mathbf{a}}_c,\bar{\mathbf{a}}]$ for all $c$ \tcc*{the black dashed lines in Figure~\ref{fig:kco2d}}
\While{the partition is not consistent with the current cluster}
	{
	Assign points to clusters based on the current partition\tcc*{assignment step}
	Update the partition based on the current clusters \tcc*{update step}
	}
\Return $\mathbf{C}$
  \caption{k-cofactors for 2d points }\label{alg:2d}
\end{algorithm}

\begin{corollary}[2d points]\label{coro:2d}
	For any $\mathbf{A}\in\mathbb{R}^{n\times 2}$, if $\mathbf{A}-\bar{\mathbf{a}}$'s rank is two\footnote{Otherwise it will reduce to the 1d case.}, then running $\dmic(\mathbf{A})$ (Algorithm~\ref{alg:dmic}) with k-cofactors algorithm (Algorithm~\ref{alg:kco}) is equivalent to Algorithm~\ref{alg:2d}. Geometrically, the optimization goal is equivalent with maximizing the area of the triangles $\Delta_{\bar{\mathbf{a}}_1,\bar{\mathbf{a}}_2,\bar{\mathbf{a}}_3}$ multiplying the $n_1 n_2 n_3$ where $n_c$ is the number of points in cluster $c$. 
\end{corollary}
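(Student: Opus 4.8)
The plan is to derive Corollary~\ref{coro:2d} as a specialization of Theorem~\ref{thm:main} to the case $d=2$, $k=3$, translating each of the three abstract properties into the concrete 2d picture. First I would invoke the \textbf{Linearly-partitioned} property: a local maximum $\mathbf{C}^*$ of $\det[\mathbf{C}^\top\tilde{\mathbf{A}}]$ is exactly a member of $\idxmax(\tilde{\mathbf{A}}\mathbf{D}^*)$ with $\mathbf{D}^*=(\mathbf{C}^{*\top}\tilde{\mathbf{A}})^{-1}$. Running k-cofactors (Algorithm~\ref{alg:kco}) is therefore the iteration: given $\mathbf{C}$, form $\mathbf{D}=(\mathbf{C}^\top\tilde{\mathbf{A}})^{-1}$ (the three columns of $\mathbf{D}$ are, up to the determinant scalar, the cofactor vectors of $\mathbf{C}^\top\tilde{\mathbf{A}}$ — this is where the name ``k-cofactors'' comes from), then reassign each row $i$ to the cluster $c$ maximizing $\tilde{\mathbf{a}}_i \cdot \mathbf{D}_c$. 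So I need to show two things: (i) for 2d points this inner-product-argmax assignment is exactly ``assign each point to the side of a partition by three rays from $\bar{\mathbf{a}}$'', and (ii) the columns of $\mathbf{D}$ obtained from cluster-indicator $\mathbf{C}$ correspond geometrically to the rays through the cluster means $\bar{\mathbf{a}}_c$.

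For step (i): with $d=2$, after adding the all-ones column we have $\tilde{\mathbf{A}}\in\mathbb{R}^{n\times 3}$ with rows $\tilde{\mathbf{a}}_i=(a_{i1},a_{i2},1)$ (choosing $L$ to be the two original columns plus the ones column, which is legitimate since $\mathbf{A}-\bar{\mathbf{a}}$ has rank $2$). The assignment rule ``$i\mapsto\argmax_c \tilde{\mathbf{a}}_i\cdot\mathbf{D}_c$'' compares affine-linear functionals of $(a_{i1},a_{i2})$; the locus where two of them tie, $\tilde{\mathbf{a}}\cdot(\mathbf{D}_c-\mathbf{D}_{c'})=0$, is a line in the plane, and by the ``intersection'' part of Theorem~\ref{thm:main} ($\bar{\tilde{\mathbf{a}}}\mathbf{D}^*\propto\mathbf{1}$) all three pairwise-tie lines pass through the single point $\bar{\mathbf{a}}$. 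Three concurrent lines through $\bar{\mathbf{a}}$ partition the plane into (up to) six sectors; I would check that each of the three argmax regions is a pair of opposite sectors — equivalently, exactly three of the six rays are active boundaries — giving the ``three rays from $\bar{\mathbf{a}}$'' partition of Algorithm~\ref{alg:2d}. For step (ii), I would show that for a given clustering the ray bounding cluster $c$'s region is the one through $\bar{\mathbf{a}}$ and $\bar{\mathbf{a}}_c$: writing $\mathbf{C}^\top\tilde{\mathbf{A}}$ as the $3\times3$ matrix whose rows are $n_c\cdot(\bar{a}_{c1},\bar{a}_{c2},1)$, its inverse $\mathbf{D}$ has the property $\mathbf{D}_c$ is orthogonal to rows $c'\neq c$, i.e. $\bar{\mathbf{a}}_{c'}$ and $\bar{\mathbf{a}}$ both lie on the line through the origin normal to... — more cleanly, the tie-line between regions $c$ and $c'$ separating them is spanned by $\bar{\mathbf{a}}$ and $\bar{\mathbf{a}}_{c''}$ for the third index $c''$, which unwinds to exactly the ``extend segment $[\bar{\mathbf{a}}_c,\bar{\mathbf{a}}]$'' prescription. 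This is essentially Cramer's rule plus the observation that $\det[\mathbf{C}^\top\tilde{\mathbf{A}}]=\pm 2\,n_1n_2n_3\cdot\mathrm{Area}(\Delta_{\bar{\mathbf{a}}_1,\bar{\mathbf{a}}_2,\bar{\mathbf{a}}_3})$ — the $3\times3$ determinant of rows $(\bar{a}_{c1},\bar{a}_{c2},1)$ is twice the signed triangle area, and factoring the $n_c$ out of each row gives the product of cluster sizes. That last identity simultaneously delivers the ``maximize area times $n_1n_2n_3$'' geometric interpretation claimed at the end, via the general simplex-volume statement in Theorem~\ref{thm:main} specialized to $d=2$.

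Finally I would assemble: by \textbf{Legal}/\textbf{Affine-invariant} we may assume $\mathbf{A}-\bar{\mathbf{a}}$ has rank $2$ (the footnote handles the degenerate case by reduction to Corollary~\ref{coro:1d}), so $k=3$ and $\tilde{\mathbf{A}}$ is as above; the k-cofactors iteration's assignment step is step~(i)'s ray-partition assignment and its update step is step~(ii)'s ``recompute means, extend segments from $\bar{\mathbf{a}}$'', which is verbatim Algorithm~\ref{alg:2d}; and the fixed points of this iteration — the local maxima — maximize $|\det[\mathbf{C}^\top\tilde{\mathbf{A}}]| = 2\,n_1n_2n_3\,\mathrm{Area}(\Delta_{\bar{\mathbf{a}}_1,\bar{\mathbf{a}}_2,\bar{\mathbf{a}}_3})$ among local optima, giving the stated geometric optimization goal.

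I expect the main obstacle to be step~(i): carefully arguing that the argmax-of-three-affine-functionals partition through a common point $\bar{\mathbf{a}}$ really is a clean ``three rays'' partition (rather than, say, one region being empty or the boundaries behaving pathologically), and matching the orientation/sign conventions so that the ``extend the segment $[\bar{\mathbf{a}}_c,\bar{\mathbf{a}}]$'' direction in Algorithm~\ref{alg:2d} is the correct ray and not its opposite. Handling ties (the $\idxmax$ set-valued convention) and the possibility of an empty or size-zero cluster — where $\mathbf{C}^\top\tilde{\mathbf{A}}$ is singular and $\mathbf{D}$ undefined — also needs a remark, but these are exactly the non-local-maximum configurations excluded by the linearly-partitioned characterization, so they can be dispatched quickly.
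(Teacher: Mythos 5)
Your proposal is correct and follows essentially the same route as the paper's proof: take $\tilde{\mathbf{A}}=[\mathbf{A}\;\,\mathbf{1}]$, factor $\mathbf{C}^{\top}\tilde{\mathbf{A}}=\diag(n_1,n_2,n_3)$ times the matrix with rows $[\bar{\mathbf{a}}_c\;\,1]$ to obtain the area-times-$n_1n_2n_3$ interpretation, and use the inner-product relations from $\mathbf{D}=(\mathbf{C}^{\top}\tilde{\mathbf{A}})^{-1}$ together with $\bar{\tilde{\mathbf{a}}}\,\mathbf{D}\propto\mathbf{1}$ to identify the tie-line between clusters $c,c'$ with the line through $\bar{\mathbf{a}}$ and the third cluster's mean, which is exactly what the paper does via its explicit $\lambda$ and $(s,t)$ parametrizations. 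One small correction to your step (i): each argmax region is an intersection of two half-planes, hence a convex wedge made of two \emph{adjacent} sectors (not a pair of opposite sectors); the equivalent formulation you give --- only three of the six rays are active boundaries --- is the right one and is precisely what the paper's wedge computation verifies.
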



\begin{proof}[Proof of Corollary~\ref{coro:2d}]

If $\mathbf{A}-\bar{\mathbf{a}}$'s rank is two, then $[\mathbf{A}\quad \mathbf{1}]$ is rank three. Therefore, we will pick $\tilde{\mathbf{A}}=[\mathbf{A}\quad \mathbf{1}]$.

$\mathbf{C}^{\top} \tilde{\mathbf{A}}=\diag(n_1,n_2,n_3)\begin{bmatrix}
	&\bar{\mathbf{a}}_1 &\quad 1\\
	&\bar{\mathbf{a}}_2 &\quad 1\\
	&\bar{\mathbf{a}}_3 &\quad 1\\
\end{bmatrix}$ where $n_c$ is the number of points in cluster $c$. Thus maximizing $\det[\mathbf{C}^{\top} \tilde{\mathbf{A}}]$ is equivalent to maximizing the area of the triangles $\Delta_{\bar{\mathbf{a}}_1,\bar{\mathbf{a}}_2,\bar{\mathbf{a}}_3}$ multiplying the $n_1 n_2 n_3$. 

For $\mathbf{D}=(\mathbf{C}^{\top} \tilde{\mathbf{A}})^{-1}=[\mathbf{d}_1,\mathbf{d}_2,\mathbf{d}_3]$, we have 
\[ [\bar{\mathbf{a}}_1 \quad 1]*\mathbf{d}_1=\frac{1}{n_1}\quad [\bar{\mathbf{a}}_1 \quad 1]*\mathbf{d}_2 =[\bar{\mathbf{a}}_1 \quad 1]*\mathbf{d}_3=0\] We also have that 
\[[\bar{\mathbf{a}}\quad \mathbf{1}]*\mathbf{d}_1=(\sum_c \frac{n_c}{n} [\bar{\mathbf{a}}_c\quad \mathbf{1}])*\mathbf{d}_1=\frac1n \]

and by similar analysis, $[\bar{\mathbf{a}}\quad \mathbf{1}]*\mathbf{d}_2=[\bar{\mathbf{a}}\quad \mathbf{1}]*\mathbf{d}_3=\frac1n$

	For point $\mathbf{a}= \lambda\bar{\mathbf{a}}+(1-\lambda) \bar{\mathbf{a}}_1$,
	
	\[ [\mathbf{a} \quad \mathbf{1}]*\mathbf{d}_1=\lambda \frac1n + (1-\lambda)\frac{1}{n_1} \]
	
	and 
	
	\[ [\mathbf{a} \quad \mathbf{1}]*\mathbf{d}_2=[\mathbf{a} \quad \mathbf{1}]*\mathbf{d}_3=\lambda\frac1n \]
	
	when $\lambda< 1$, $\mathbf{a}$ should be mapped to cluster 1 and when $\lambda\geq 1$, $\mathbf{a}$ can be mapped to either cluster 2 or cluster 3. Thus, we can obtain the partition line for cluster 2 and 3, call it line 2-3, by extending line segment $[\bar{\mathbf{a}}_1,\bar{\mathbf{a}}]$. By analogous analysis for $\bar{\mathbf{a}}_2, \bar{\mathbf{a}}_3$, we obtain three partition lines like Figure~\ref{fig:kco2d}. 
	
	For each point $\mathbf{x}$ between line 1-2 and 2-3, we can represent it as $\mathbf{x}=\bar{\mathbf{a}}+s (\bar{\mathbf{a}}-\bar{\mathbf{a}}_3)+ t  (\bar{\mathbf{a}}-\bar{\mathbf{a}}_1)$ where $s,t>0$. Then $[\mathbf{x}\quad \mathbf{1}]*\mathbf{d}_2=\frac1n (1+s+t)$, $[\mathbf{x}\quad \mathbf{1}]*\mathbf{d}_1=\frac1n (1+s+t)-t\frac{1}{n_1}$ and $[\mathbf{x}\quad \mathbf{1}]*\mathbf{d}_3=\frac1n (1+s+t)-s\frac{1}{n_3}$. Thus, $\mathbf{x}$ belongs to cluster 2. By analogous analysis for other points, we finish the proof.

	\end{proof}

\begin{proof}[Proof of Theorem~\ref{thm:main}]
	We first prove the affine-invariance. Note that \[[\mathbf{A}\mathbf{T}+\mathbf{b}\quad \mathbf{1}]=[\mathbf{A}\quad \mathbf{1}]\begin{bmatrix}\mathbf{T}&0\\\mathbf{b}&1\end{bmatrix}.\]
	
	Thus, for invertible $\mathbf{T}$, both $[\mathbf{A}\mathbf{T}+\mathbf{b}\quad \mathbf{1}]$ and $[\mathbf{A}\quad \mathbf{1}]$ have the same rank $k$.
	
	Since $\tilde{\mathbf{A}}$ are $k$ linearly independent columns of $[\mathbf{A}\quad \mathbf{1}]$, there exists $\mathbf{P}\in\mathbb{R}^{k\times (d+1)}$ such that $[\mathbf{A}\quad \mathbf{1}]=\tilde{\mathbf{A}}\mathbf{P}$. Therefore, we have $[\mathbf{A}\mathbf{T}+\mathbf{b}\quad \mathbf{1}]=\tilde{\mathbf{A}}\mathbf{P}\begin{bmatrix}\mathbf{T}&0\\\mathbf{b}&1\end{bmatrix}$. 	
	
	Thus, for any $\tilde{\mathbf{A}}'$ that consists of $k$ linearly independent columns in $[\mathbf{A}\mathbf{T}+\mathbf{b}\quad \mathbf{1}]$, we pick corresponded column vectors of $\mathbf{P}\begin{bmatrix}\mathbf{T}&0\\\mathbf{b}&1\end{bmatrix}$ to construct matrix $\mathbf{Q}\in\mathbb{R}^{k\times k}$. We have $\tilde{\mathbf{A}}'=\tilde{\mathbf{A}}\mathbf{Q}$ and $\mathbf{Q}$ is invertible (otherwise $\tilde{\mathbf{A}}'$'s rank will be less than $k$ which is a contradiction). 
	
	Therefore, \[\det[\mathbf{C}^{\top}\tilde{\mathbf{A}'}]=\det[\mathbf{C}^{\top}\tilde{\mathbf{A}}]\det[\mathbf{Q}]\propto\pm\det[\mathbf{C}^{\top}\tilde{\mathbf{A}}]
\] due to the multiplicative property of determinant. The relative-invariance of DMI-score and affine-invariance of DMI-clustering directly follows from the above formula. 

We then show that DMI-clustering is legal. Note that since it is affine-invariant, we only need to analyze $\dmic(\mathbf{C}^*)$. Since the sum of the column vectors of $\mathbf{C}^*$ is all one vector $\mathbf{1}$ and the rank of $\mathbf{C}^{\top}$ is $d$, we can pick $\mathbf{C}^*$ itself as the matrix that consists of $d$ linearly independent columns of $[\mathbf{C}^*\quad\mathbf{1}]$. 

We normalize $\mathbf{C}^*$ by dividing each column vector by its sum to obtain $\ncol(\mathbf{C}^*)$. Note that this normalization is equivalent to multiplying a diagonal matrix on the right. Thus, maximizing $\det[\mathbf{C}^{\top}\mathbf{C}^*]$ over $\mathbf{C}\in\mathcal{C}^{soft}$ is equivalent to $\max_{\mathbf{C}\in\mathcal{C}^{soft}}\det[\mathbf{C}^{\top}\ncol(\mathbf{C}^*)]$. 

Moreover, $\mathbf{C}^{*\top}\ncol(\mathbf{C}^*)$ is an identity matrix whose determinant is one. Since both $\mathbf{C}^{\top}$ and $\ncol(\mathbf{C}^*)$ are column-stochastic matrices\footnote{Each entry is in [0,1] and each column sums to 1.}, their product is still a column-stochastic matrix. When $\mathbf{C}$ does not equal to $\mathbf{C}^*$ even up to column permutation, i.e., there does not exist a permutation matrix $\pi$ such that $\mathbf{C}=\pi\mathbf{C}^*$, the product of $\mathbf{C}^{\top}$ and $\ncol(\mathbf{C}^*)$ will be a non-permutation column-stochastic matrix whose determinant is strictly less than 1. Therefore, when there exists full rank $\mathbf{C}^*\in\mathcal{C}$ such that $\mathbf{A}=\mathbf{C}^{*\top}\mathbf{B}$ where $\mathbf{B}$ is invertible, $\dmic(\mathbf{A})$ equals $\mathbf{C}^*$ (up to column permutation).  

Finally, we prove that the clustering result is linearly-partitioned. $\tilde{\mathbf{A}}$'s row vectors are $\tilde{\mathbf{a}}_1,\tilde{\mathbf{a}}_2,\cdots,\tilde{\mathbf{a}}_n$. $\mathbf{C}^{*\top} \tilde{\mathbf{A}}$'s row vectors are denoted by $\mathbf{v}_1,\mathbf{v}_2,\cdots,\mathbf{v}_k$. 

For all $i\in[n]$, we denote the cluster it is mapped to by $c_i^*$, i.e., $\mathbf{C}^*_{i,c_i^*}=1$. For all $c\in[k]$, when $\mathbf{C}^*$ is a local maximum, the current assignment is better than moving $i$ to cluster $c$, that is,

\[\det[\mathbf{C}^{*\top} \tilde{\mathbf{A}}]=\det[\begin{bmatrix}\mathbf{v}_1\\\mathbf{v}_2\\\vdots\\\mathbf{v}_k\end{bmatrix}]\geq \det[\begin{bmatrix}\vdots\\\mathbf{v}_{c_i^*}-\tilde{\mathbf{a}}_i\\ \vdots\\\mathbf{v}_c+\tilde{\mathbf{a}}_i\\\vdots\end{bmatrix}]\]

Due to the multilinear property of determinant, 

\begin{align*}
	\det[\begin{bmatrix}\vdots\\\mathbf{v}_{c_i^*}-\tilde{\mathbf{a}}_i\\ \vdots\\\mathbf{v}_c+\tilde{\mathbf{a}}_i\\\vdots\end{bmatrix}]=&\det[\begin{bmatrix}\vdots\\\mathbf{v}_{c_i^*}\\ \vdots\\\mathbf{v}_c+\tilde{\mathbf{a}}_i\\\vdots\end{bmatrix}]-\det[\begin{bmatrix}\vdots\\\tilde{\mathbf{a}}_i\\ \vdots\\\mathbf{v}_c+\tilde{\mathbf{a}}_i\\\vdots\end{bmatrix}]\\
	=&\det[\begin{bmatrix}\vdots\\\mathbf{v}_{c_i^*}\\ \vdots\\\mathbf{v}_c\\\vdots\end{bmatrix}]+\det[\begin{bmatrix}\vdots\\\mathbf{v}_{c_i^*}\\ \vdots\\\tilde{\mathbf{a}}_i\\\vdots\end{bmatrix}]-\det[\begin{bmatrix}\vdots\\\tilde{\mathbf{a}}_i\\ \vdots\\\mathbf{v}_c\\\vdots\end{bmatrix}]
\end{align*}
Thus, we have \[\det[\begin{bmatrix}\vdots\\\tilde{\mathbf{a}}_i\\ \vdots\\\mathbf{v}_c\\\vdots\end{bmatrix}]\geq \det[\begin{bmatrix}\vdots\\\mathbf{v}_{c_i^*}\\ \vdots\\\tilde{\mathbf{a}}_i\\\vdots\end{bmatrix}]\] for all $c$.

Since $\mathbf{D}^*:=(\mathbf{C}^{*\top} \tilde{\mathbf{A}})^{-1}$, $\det[\mathbf{C}^{*\top}\tilde{\mathbf{A}}]D^*_{ij}$ is the cofactors of matrix $\mathbf{C}^{*\top}\tilde{\mathbf{A}}$. We denote $\mathbf{D}^*$'s column vectors by $\mathbf{d}^*_1,\mathbf{d}^*_2,\cdots,\mathbf{d}^*_k$. 

Due to Laplace expansion, we have 

\[ \tilde{\mathbf{a}}_i*(\det[\mathbf{C}^{*\top}\tilde{\mathbf{A}}]\mathbf{d}^*_{c_i^*})=\det[\begin{bmatrix}\vdots\\\tilde{\mathbf{a}}_i\\ \vdots\\\mathbf{v}_c\\\vdots\end{bmatrix}]\geq\det[\begin{bmatrix}\vdots\\\mathbf{v}_{c_i^*}\\ \vdots\\\tilde{\mathbf{a}}_i\\\vdots\end{bmatrix}]= \tilde{\mathbf{a}}_i*(\det[\mathbf{C}^{*\top}\tilde{\mathbf{A}}]*\mathbf{d}^*_{c}) \]

Since $\det[\mathbf{C}^{*\top}\tilde{\mathbf{A}}]$ is the maximum that must be positive, we have for all $i$, for all $c$, 

\[ \tilde{\mathbf{a}}_i*\mathbf{d}^*_{c_i^*}\geq \tilde{\mathbf{a}}_i*\mathbf{d}^*_{c}. \] 

Thus, $\mathbf{C}^*\in \idxmax(\tilde{\mathbf{A}}\mathbf{D}^*)$. Moreover,

\begin{align*}
	\mathbf{1}\mathbf{D}^{*-1}=\mathbf{1}\mathbf{C}^{*\top}\tilde{\mathbf{A}} = \mathbf{1} \tilde{\mathbf{A}} \propto \bar{\tilde{\mathbf{a}}}
\end{align*}

which shows that $\bar{\tilde{\mathbf{a}}} *\mathbf{D}^*\propto \mathbf{1}$, i.e., the mean of all data points is in the intersection of all partitions. We also have

\[\mathbf{C}^{\top} \tilde{\mathbf{A}}=\diag(n_1,n_2,\cdots,n_k)\begin{bmatrix}
	&\bar{\mathbf{a}}_1 &\quad 1\\
	&\bar{\mathbf{a}}_2 &\quad 1\\
	&\vdots &\quad \vdots\\
	&\bar{\mathbf{a}}_k &\quad 1\\
\end{bmatrix}\]

whose determinant is proportional to the volume of the simplex formed by $(\bar{\mathbf{a}}_1,\bar{\mathbf{a}}_2,\cdots,\bar{\mathbf{a}}_k)$, multiplying the product of cluster sizes $n_1,n_2,\cdots,n_k$. We finish the proof.

\end{proof}

The linearly-partitioned part in the above theorem shows that instead of trying all possible $n\times k$ $\mathbf{C}$, we can try all possible $k\times k$ $\mathbf{D}$ and check whether its corresponding $\mathbf{C}$ is optimal to obtain the result of DMI-clustering. In particular, when dimension $d$ is a constant, there exists the number of distinct linear-partitions are polynomial in the number of data points $n$ due to theory of VC dimension \cite{blumer1989learnability}. This implies the following corollary. 
\begin{corollary}
	 When dimension $d$ is a constant, when we maximize over the default constraint $\mathcal{C}=\{\mathbf{C}\in\{0,1\}^{n\times k}|\forall i,\sum_{c}C_{ic}= 1\}$, there exists a polynomial algorithm for DMI-clustering. 
\end{corollary}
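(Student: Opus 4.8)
The plan is to turn the \textbf{linearly-partitioned} conclusion of Theorem~\ref{thm:main} into an explicit enumeration. First, we may assume the optimal DMI-score is strictly positive: if $\det[\mathbf{C}^{\top}\tilde{\mathbf{A}}]=0$ for every $\mathbf{C}\in\mathcal{C}$, then every $\mathbf{C}$ is an optimal solution and we are trivially done. So fix any global maximizer $\mathbf{C}^{*}$; it is in particular a local maximizer, so Theorem~\ref{thm:main} gives $\mathbf{C}^{*}\in\idxmax(\tilde{\mathbf{A}}\mathbf{D}^{*})$ with $\mathbf{D}^{*}=(\mathbf{C}^{*\top}\tilde{\mathbf{A}})^{-1}\in\mathbb{R}^{k\times k}$ (the inverse exists because the score is nonzero). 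In other words $\mathbf{C}^{*}$ is a \emph{linear partition} of the rows $\tilde{\mathbf{a}}_{1},\dots,\tilde{\mathbf{a}}_{n}$ of $\tilde{\mathbf{A}}$: there is a $\mathbf{D}=[\mathbf{d}_{1},\dots,\mathbf{d}_{k}]$ that sends each $i$ to $\argmax_{c}\langle\tilde{\mathbf{a}}_{i},\mathbf{d}_{c}\rangle$. Hence it suffices to enumerate the family
\[\mathcal{F}:=\bigl\{\,\idxmax(\tilde{\mathbf{A}}\mathbf{D}):\mathbf{D}\in\mathbb{R}^{k\times k}\,\bigr\},\]
compute $|\det[\mathbf{C}^{\top}\tilde{\mathbf{A}}]|$ for each $\mathbf{C}\in\mathcal{F}$, and output the best one together with $\mathbf{D}^{*}=(\mathbf{C}^{\top}\tilde{\mathbf{A}})^{-1}$; correctness is immediate since the true optimum lies in $\mathcal{F}$.

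The quantitative heart of the argument is that $|\mathcal{F}|$ is polynomial in $n$ when $d$ (hence $k\le d+1$) is a constant, and that $\mathcal{F}$ can be enumerated in polynomial time. The cluster of point $i$ under $\mathbf{D}$ is completely determined by the $\binom{k}{2}$ signs $\sign\langle\tilde{\mathbf{a}}_{i},\mathbf{d}_{c}-\mathbf{d}_{c'}\rangle$, and each such quantity is a fixed linear functional of $\mathbf{D}$ (indeed of the $k(k-1)$ coordinates of the differences, so we may regard $\mathbf{D}$ as a point of $\mathbb{R}^{k(k-1)}$). Over all $i\in[n]$ and all ordered pairs we obtain an arrangement of at most $N:=n\binom{k}{2}$ hyperplanes in $\mathbb{R}^{k(k-1)}$, on the interior of each full-dimensional cell of which all these signs -- and therefore the entire partition -- are constant. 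By the Sauer--Shelah / VC-dimension bound for halfspaces~\cite{blumer1989learnability}, the number of cells is $\sum_{j=0}^{k(k-1)}\binom{N}{j}=O(N^{k(k-1)})=\mathrm{poly}(n)$, and one representative $\mathbf{D}$ per cell can be produced in time polynomial in the number of cells (for instance by the standard incremental construction of hyperplane arrangements, or elementarily by intersecting each $\le k(k-1)$-subset of the $N$ functionals and perturbing into every incident cell). For each representative we form the $k\times k$ matrix $\mathbf{C}^{\top}\tilde{\mathbf{A}}$ in $O(nk)$ time and compute its determinant in $O(k^{3})$ time. Together with the preprocessing of Algorithm~\ref{alg:dmic} (forming $\mathbf{A}^{+1}$, computing its rank $k$, choosing $\tilde{\mathbf{A}}$), this runs in $\mathrm{poly}(n)$ time whenever $d=O(1)$.

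The step I expect to be the main obstacle is handling degenerate configurations, i.e.\ the case where the optimal separator $\mathbf{D}^{*}$ lies not in a cell interior but on one or more of the hyperplanes, so some points are tied between clusters and $\idxmax$ is set-valued there. When $\mathbf{D}^{*}$ is a cell interior the argument above is complete, since then $\mathbf{C}^{*}$ is exactly the tie-free partition read off from that cell. To cover ties I would argue by an infinitesimal perturbation of the data: replace $\tilde{\mathbf{A}}$ by $\tilde{\mathbf{A}}+\varepsilon\mathbf{G}$ for a generic $\mathbf{G}$ and an indeterminate (or sufficiently small rational) $\varepsilon>0$, run the tie-free procedure on the perturbed instance, and return the resulting partition evaluated against the \emph{original} $\tilde{\mathbf{A}}$. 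Correctness then rests on two facts that are where the real work lies: (i) an optimizer of the perturbed instance is, as a partition, an optimizer of the original -- which holds because $|\det[\mathbf{C}^{\top}\tilde{\mathbf{A}}]|$ is a polynomial in the entries of $\tilde{\mathbf{A}}$, there are finitely many partitions, and the optimal value is separated from every worse value by a gap that survives a small enough perturbation; and (ii) for generic $\mathbf{G}$ (or all but finitely many $\varepsilon$) the perturbed optimizer is realized at a cell interior of the perturbed arrangement, so that it is actually produced by the enumeration. Pinning down ``generic'' precisely -- so that simultaneously the cell count stays polynomial and the separator induced by the optimal perturbed partition stays off all the hyperplanes -- is the delicate point; everything else is routine linear algebra and the by-now-standard algorithmics of low-dimensional hyperplane arrangements.
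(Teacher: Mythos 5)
Your proposal follows essentially the same route as the paper: invoke the linearly-partitioned property of Theorem~\ref{thm:main} to conclude that the global optimum is realized by some linear partition, observe that for constant $d$ (hence constant $k\le d+1$) there are only polynomially many such partitions by a VC-dimension/shatter-function count \cite{blumer1989learnability}, and enumerate them all, evaluating $\det[\mathbf{C}^{\top}\tilde{\mathbf{A}}]$ for each candidate. The differences are minor — you count cells of a hyperplane arrangement in the dual space of $\mathbf{D}$ whereas the paper counts realizable clusters as intersections of $k-1$ halfspaces in the primal, and you explicitly worry about ties/degenerate separators via perturbation, a point the paper's proof simply glosses over.
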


\begin{proof}

When $d$ is a constant, the number of clusters $k\leq d+1$ is a constant as well. Given $n$ data points, for any linear-partition $\mathbf{D}^*$, the points that belong to cluster 1 is the set of all points $\{\tilde{a}|\tilde{a}*\mathbf{d}^*_1>\tilde{a}*\mathbf{d}^*_c,c\neq 1\}$ thus is intersections of $k-1$ half-spaces. Since half-space's VC dimension is a constant $k+1$ here, there will be at most polynomial $O(n^{(k-1)(k+1)})$ distinct cluster 1. Moreover, the number of clusters $k$ is a constant here. Thus there will be most $O(n^{k(k-1)(k+1)})$ distinct linear-partitions. Thus, we can enumerate these linear-partitions in polynomial time and check which one gives the best cluster $\mathbf{C}^*$ that maximizes the determinant. 
	
\end{proof}
  
In the next section, we will apply DMI-clustering to extract knowledge from elicited information without ground truth. We will show that this method will be robust to people's strategies and expertise. 


\section{Multi-task \& Invariant Clustering}

This section will introduce a multi-task information elicitation mechanism which is not only dominantly-truthful but also extracts knowledge from the collected answers. The extraction is robust when each task is answered by a large number of randomly selected people.


\subsection{Knowledge-Extracted DMI (K-DMI)-Mechanism}

\paragraph{Setting} There are $n$ a priori similar tasks and $m$ agents. Each task is a multi-choice question with $C$ options. Each agent is assigned a random set of tasks (size $\geq 2C$) in random order independently. Agents finish tasks without communication. Each agent $i$'s reports are represented as a $n\times C$ 0-1 report matrix $\mathbf{R}_i$ where the $(t,c)$ entry of $\mathbf{R}_i$ is 1 if and only if agent $i$ reports option $c$ for task $t$. The all zero rows in $\mathbf{R}_i$ are the tasks that agent $i$ does not perform.

\paragraph{Model} Tasks have states which are drawn i.i.d. from an unknown distribution. Each task $t$'s state is a vector $\mathbf{a}_t\in[0,1]^C$ where each agent will receive signal $c$ with probability $\mathbf{a}_t(c)$. Each agent $i$ is associated with a strategy $\mathbf{S}_i\in[0,1]^{C\times C}$ where $\mathbf{S}_i(c,c')$ is the probability that she receives $c$ and reports $c'$. We assume that each agent performs the same strategy for all tasks. This assumption is reasonable since the tasks are a priori similar and agents receive tasks in a random order independently. A payment scheme is dominantly truthful if for each agent, regardless of other people's strategies, telling the truth pays her higher\footnote{not strictly higher} than any other strategy in expectation. 


\begin{algorithm}[H]
\SetAlgoLined
\DontPrintSemicolon
\KwInput{$\mathbf{R}_i\in\{0,1\}^{n\times C},i\in[m]$} 
\KwOutput{extracted knowledge $\mathbf{C}^*$,payment $p_i,i\in[m]$}
\tcc{Knowledge extraction}
$\mathbf{A}=\sum_i \mathbf{R}_i $\;
$\forall t,c, A_{tc}=\frac{A_{tc}}{\sum_c A_{tc}}$\;
Get $\mathbf{C}^*$ from $\dmic(\mathbf{A})$\;
\tcc{Payment calculation}
\For {$i\in[m]$}
{
$\mathbf{A}=\sum_{j\neq i} \mathbf{R}_j $\;
$\forall t,c, A_{tc}=\frac{A_{tc}}{\sum_c A_{tc}}$\;
$T_{i}$ is the set of tasks agent $i$ performed, $T_{-i}=[n]/T_{i}$\;

\tcc{Use $T_{-i}$ to obtain the partition to guarantee independence}
Get $\mathbf{D}^*_{-i}$ and $L$ from $\dmic(\mathbf{A}_{T_{-i},:})$\;
\tcc{Use obtained partition to extract knowledge for tasks in $T_i$}
$\mathbf{C}^*_i\leftarrow \idxmax(\mathbf{A}_{T_i,L}\mathbf{D}^*_{-i})$\;
Divide $T_i$ into two disjoint parts $T_1,T_2$ such that both $|T_1|,|T_2|\geq C$\;
Restrict $\mathbf{C}^*_i$ and $\mathbf{R}_i$ to $T_{\ell}$ to get $\mathbf{C}^*_{\ell}$ and $\mathbf{R}_i^{\ell}$ respectively for $\ell=1,2$\;
\tcc{Reward agent $i$ using the extracted knowledge}
$p_i= \det(\mathbf{C}_1^{*\top}\mathbf{R}^1_i)\det(\mathbf{C}_2^{*\top}\mathbf{R}^2_i)$
}\caption{\textbf{Knowledge-Extracted DMI (K-DMI)-Mechanism}} \label{alg:kdmi}
\end{algorithm}

\begin{figure}
	\includegraphics[width=15cm]{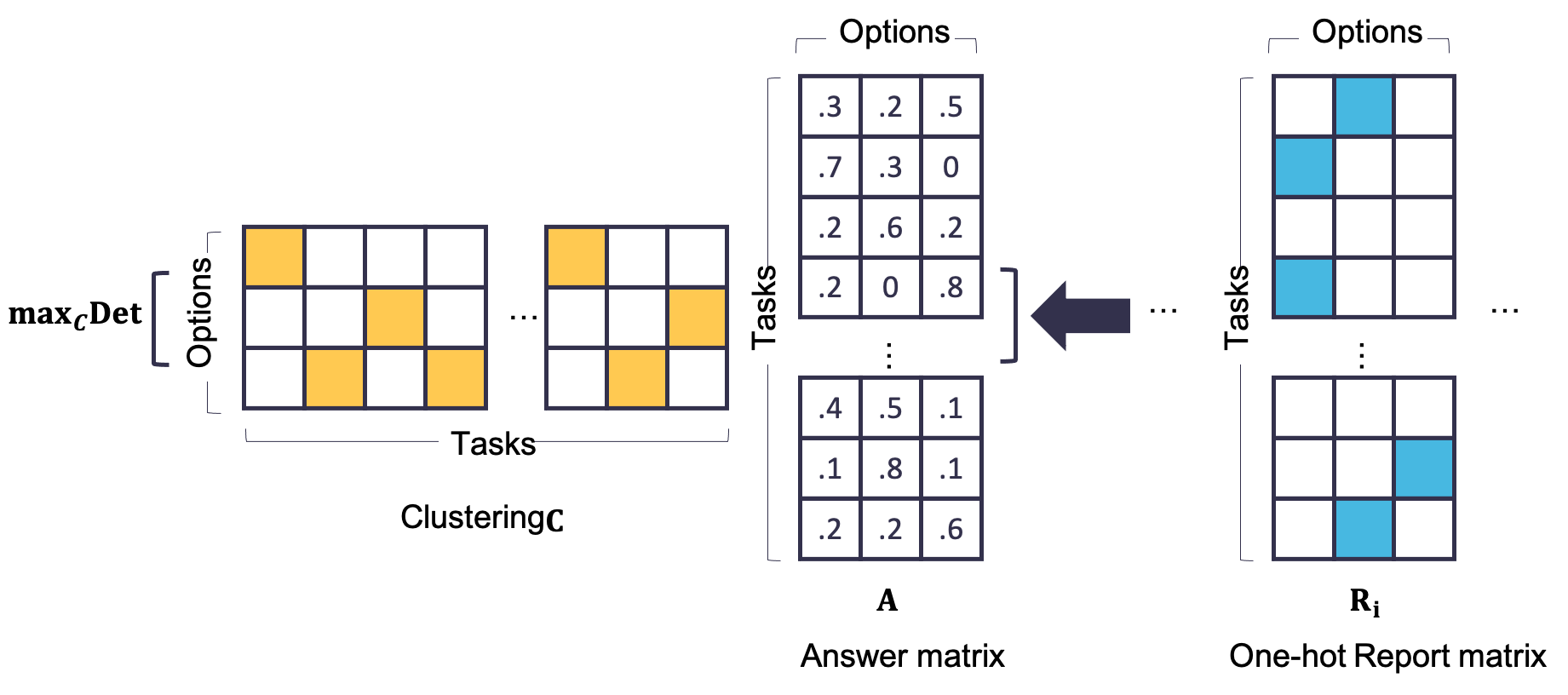}
	\caption{\textbf{Knowledge extraction}: we aggregate all agents' answers and normalize each row to obtain answer matrix $\mathbf{A}$. When $\mathbf{A}$'s rank is $C$, $\dmic(\mathbf{A})$ is equivalent to $\max_{\mathbf{C}}\det[\mathbf{C}^{\top}\mathbf{A}]$.  }
\end{figure}

\begin{figure}
	\includegraphics[width=15cm]{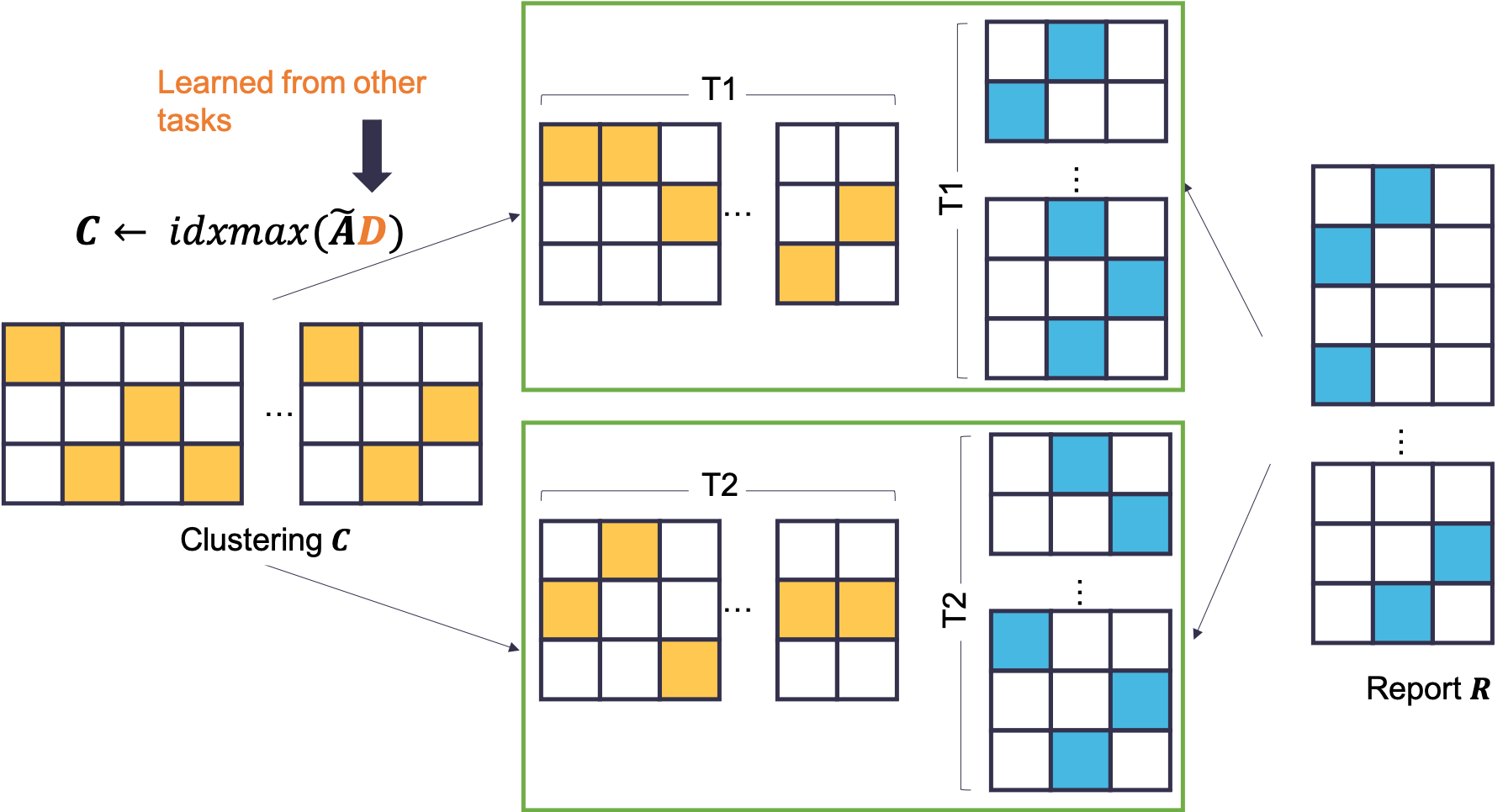}
	\caption{\textbf{Payment calculation}: for each agent, we use the tasks she does not perform to learn the partition $\mathbf{D}$ and induce a clustering result for the tasks she performs. To pay her, we divide the tasks she performs into two disjoint parts $T_1,T_2$. We calculate determinant of the product of the two matrices in each green rectangle and multiply the determinant. Note that it's same as the original \dmi-Mechanism if we change the clustering result to a peer agent's report. }
\end{figure}

K-DMI-Mechanism is a modification of DMI-Mechanism which pays each agent the determinant mutual information between her report and her peer's report in expectation.

\begin{definition}[Determinant based Mutual Information (DMI) \cite{kong2020dominantly}] For two random variables $X,Y$ whose joint distribution is denoted as a matrix $\mathbf{U}_{X,Y}$,\[\dmi(X;Y):=|\det(\mathbf{U}_{X,Y})|\]\end{definition}

Since DMI satisfies data processing inequality and its square has a polynomial format, we can use $\geq 2C$ tasks to implement it. Compared to DMI, K-DMI-Mechanism adds a clustering process, which can be interpreted as finding a clustering which has the highest determinant mutual information with the collection of all agents' answers. For the payment calculation, K-DMI-Mechanism pays each agent the determinant mutual information between her report and the clustering result to reduce variance. The original DMI-Mechanism's payments may not be fair to the hard-working agents when there exists a lot of low-efforts agents. The current payment calculation method may be more robust to other people's strategies. For example, in the simple case where the answer distributions only have $C$ distinct formats, we can use other tasks to learn the correct partition and each agent is rewarded between her report and the underlying ground-truth in K-DMI.

\begin{theorem}\label{thm:kdmi}
	K-DMI-Mechanism is dominantly-truthful. If the number of agents $m$ goes to infinite and $\lim_{m\rightarrow\infty}\frac1m\sum_{i\in[m]}\mathbf{S}_i$ exists and is invertible, the clustering result is invariant to people's strategies. When $\mathbf{A}$'s rank is $C$, $\dmic(\mathbf{A})$ is equivalent to $\argmax_{\mathbf{C}}\det[\mathbf{C}^{\top}\mathbf{A}]$ and $v:=max_{\mathbf{C}}\det[\mathbf{C}^{\top}\mathbf{A}]$ also indicates the quality of answers: $v$ decreases when agents perform strategies. 
\end{theorem}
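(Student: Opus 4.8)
The plan is to treat the three assertions in turn, using Theorem~\ref{thm:main} together with the analysis of the DMI-Mechanism of~\cite{kong2020dominantly}. \emph{Dominant truthfulness.} The key structural observation is that the clustering object $\mathbf{C}^*_i$ used to pay agent $i$ is a deterministic function of the reports $\{\mathbf{R}_j:j\neq i\}$ alone: the partition $\mathbf{D}^*_{-i}$ is computed from tasks in $T_{-i}$, and the labels it induces on $T_i$ are read off from $\sum_{j\neq i}\mathbf{R}_j$. Since, conditioned on the task states, each agent's report is a function of her own signals only, $\mathbf{C}^*_i$ is conditionally independent of $\mathbf{R}_i$; and after splitting $T_i$ into disjoint halves $T_1,T_2$ (possible since $|T_i|\geq 2C$), the pair $(\mathbf{C}^*_1,\mathbf{R}^1_i)$ is conditionally independent of $(\mathbf{C}^*_2,\mathbf{R}^2_i)$ given the states and $\mathbf{D}^*_{-i}$, because each pair is a function of the signals on a disjoint block of tasks. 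Conditioning on $\mathbf{D}^*_{-i}$ and running the multilinearity-of-determinant expectation computation from~\cite{kong2020dominantly}, the expectation over agent $i$'s signals replaces her report row on each task by its mean $\mathbf{a}_t\mathbf{S}_i$, so each factor becomes a positive combinatorial constant times $\det(\mathbf{W}\,\mathbf{S}_i)=\det(\mathbf{W})\det(\mathbf{S}_i)$ for a matrix $\mathbf{W}$ (the task-averaged joint distribution of cluster-label and true signal) that does not depend on $\mathbf{S}_i$. Hence $\E[p_i]$ is a nonnegative constant times $\det(\mathbf{S}_i)^2$, and since $\mathbf{S}_i$ is row-stochastic we have $|\det\mathbf{S}_i|\le1$ (the maximum of $|\det|$ over a product of simplices is attained at a vertex, where the value is $1$ exactly when the matrix is a permutation), so honest reporting ($\mathbf{S}_i=\mathbf{I}$) maximizes $\E[p_i]$.

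\emph{Invariance of the clustering as $m\to\infty$.} Let $\mathbf{A}_0\in[0,1]^{n\times C}$ be the matrix whose $t$-th row is the state $\mathbf{a}_t$. A fixed task $t$ is answered by $\Theta(m)$ randomly chosen agents, and $\E[(\mathbf{R}_j)_{t,:}\mid\mathbf{a}_t]=\mathbf{a}_t\mathbf{S}_j$, so by the law of large numbers the normalized aggregate row converges to $\mathbf{a}_t\cdot\bigl(\lim_m\tfrac1m\sum_i\mathbf{S}_i\bigr)=\mathbf{a}_t\bar{\mathbf{S}}$ (the average of $\mathbf{S}_j$ over the random subset performing $t$ has the same limit as the average over all agents, and normalization is automatic because $\mathbf{a}_t$ and every row of $\bar{\mathbf{S}}$ sum to $1$). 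Thus $\mathbf{A}\to\mathbf{A}_0\bar{\mathbf{S}}$, a non-degenerate linear image of $\mathbf{A}_0$ since $\bar{\mathbf{S}}$ is invertible, and by the affine-invariance part of Theorem~\ref{thm:main}, $\dmic(\mathbf{A}_0\bar{\mathbf{S}})$ and $\dmic(\mathbf{A}_0)$ have the same solution set; since $\mathbf{A}_0$ is the same regardless of the strategies, so is the limiting clustering. (To pass from the finite-$m$ output $\dmic(\mathbf{A})$ to this limit one assumes $\mathbf{A}_0\bar{\mathbf{S}}$ has a strictly optimal clustering, or states the conclusion at the population level $\mathbf{A}=\mathbf{A}_0\bar{\mathbf{S}}$.)

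\emph{The rank-$C$ case and the score $v$.} When $\mathrm{rank}(\mathbf{A})=C$, the all-ones column $\mathbf{1}=\mathbf{A}\mathbf{1}$ lies in the column space of $\mathbf{A}$, so $\mathrm{rank}([\mathbf{A}\ \mathbf{1}])=C$ and Algorithm~\ref{alg:dmic} may take $\tilde{\mathbf{A}}=\mathbf{A}$, whence $\dmic(\mathbf{A})=\argmax_{\mathbf{C}\in\mathcal{C}}\det[\mathbf{C}^\top\mathbf{A}]$ (this set is independent of the choice of $\tilde{\mathbf{A}}$ by affine-invariance), and $v=\max_{\mathbf{C}}\det[\mathbf{C}^\top\mathbf{A}]=\max_{\mathbf{C}}|\det[\mathbf{C}^\top\mathbf{A}]|$ is the DMI-score of the winning clustering. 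For the quality statement, write the strategic answer matrix (in the large-$m$ limit) as $\mathbf{A}^{\mathrm{strat}}=\mathbf{A}^{\mathrm{hon}}\bar{\mathbf{S}}$; by multiplicativity of the determinant $v(\mathbf{A}^{\mathrm{strat}})=|\det\bar{\mathbf{S}}|\cdot v(\mathbf{A}^{\mathrm{hon}})\le v(\mathbf{A}^{\mathrm{hon}})$, with equality iff $\bar{\mathbf{S}}$ is a permutation, again because $|\det\bar{\mathbf{S}}|\le1$ for a row-stochastic matrix.

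\emph{Main obstacle.} The rank argument and the $v$-inequality are routine. The delicate part is dominant truthfulness: one must set up the conditioning so that $\mathbf{C}^*_i$ is genuinely independent of $\mathbf{R}_i$ and the two halves decouple, and then check that the determinant-expectation identity of~\cite{kong2020dominantly} still applies with the clustering label playing the role of a peer report — which it does, because on each task the label is conditionally independent of agent $i$'s signal given the state, and that is all the identity needs.
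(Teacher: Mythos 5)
Your proposal is correct and follows essentially the same route as the paper: treat the clustering result $\mathbf{C}^*_i$ as a virtual peer (independent of agent $i$'s reports since it is built from $\sum_{j\neq i}\mathbf{R}_j$ and a partition learned on $T_{-i}$) and reduce truthfulness to the DMI-Mechanism argument, prove strategy-invariance by showing the limiting answer matrix is $\mathbf{A}_0\bar{\mathbf{S}}$ and invoking affine-invariance from Theorem~\ref{thm:main}, take $\tilde{\mathbf{A}}=\mathbf{A}$ in the rank-$C$ case, and bound $v$ via multiplicativity of the determinant and $|\det\bar{\mathbf{S}}|\le 1$. The only difference is that you spell out the determinant-expectation computation that the paper invokes by citing the DMI-Mechanism as a black box, which is a harmless (indeed clarifying) elaboration.
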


\begin{proof}[Proof of Theorem~\ref{thm:kdmi}]
	From agent $i$'s view, each task's elicited answers are drawn i.i.d. from a distribution. Thus, the partition $\mathbf{D}_{-i}^*$ learned from the set of tasks $T_{-i}$ is independent of the set of tasks $T_i$ she performs. Therefore, after operating $\mathbf{D}_{-i}^*$ on $T_i$'s answers (excluding agent $i$'s own answer) to obtain a clustering result, the i.i.d. assumption is still satisfied for the pair of agent $i$'s answer and the clustering result $\mathbf{C}_i^*$. We can think the clustering result $\mathbf{C}_i^*$ is answer from a virtual peer agent. The payment calculation reduced to \dmi-Mechanism. Since \dmi-Mechanism is dominantly-truthful, k-DMI-Mechanism is also dominantly truthful. 
	
 For each task $t$, in expectation, 	the honest answer distributions will be $\mathbf{a}_t$. Note that each agent is assigned a random set of tasks independently. We denote the probability each agent is assigned task $t$ as $p$. In expectation, there will be $pm$ agents who assign task $t$. Moreover, after agents perform strategies, before normalization, the expected answer feedback will be $\sum_{i\in[m]} p\mathbf{a}_t \mathbf{S}_i=\mathbf{a}_t \sum_{i\in[m]}p\mathbf{S}_i$. Thus, when $m$ goes to infinite, the expected answer distribution will be $\mathbf{a}_t \lim_{m\rightarrow \infty}\frac1m\sum_{i\in[m]}\mathbf{S}_i$. Since DMI-clustering is affine-invariant, then clustering result is invariant to people's strategies. 
 
 When $\mathbf{A}$'s rank is $C$, we can pick $\tilde{\mathbf{A}}=\mathbf{A}$ (all one column is the sum of $\mathbf{A}$'s columns thus cannot be added here). Thus, $\dmic(\mathbf{A})$ is equivalent to $\argmax_{\mathbf{C}}\det[\mathbf{C}^{\top}\mathbf{A}]$. 
 
 For $v:=max_{\mathbf{C}}\det[\mathbf{C}^{\top}\mathbf{A}]$, when agents perform strategies, it becomes
 \begin{align*}
 v&=max_{\mathbf{C}}\det[\mathbf{C}^{\top}\mathbf{A}(\lim_{m\rightarrow \infty}\frac1m\sum_{i\in[m]}\mathbf{S}_i)]\\
 &\leq max_{\mathbf{C}}\det[\mathbf{C}^{\top}\mathbf{A}]
 \end{align*}
 
since the absolute value of any transition matrix's determinant is less than one. 
 
 Thus, we finish the proof.

	\end{proof}

\paragraph{Discussion for the practical implementation of K-DMI} In DMI-clustering, we cannot distinguish two clustering if they are equivalent up to a permutation for the options. However, if we have a small set of labeled questions, given an agreement measure, we can pick an option-permutation that has the highest agreement with the labeled questions. In addition to the default constraint, we can also pick other constraint set $\mathcal{C}$ without losing the strategy-invariant property. For example, we can use other aggregation methods' results as candidates and determine the best candidate using DMI-clustering. In the payment calculation part, if we have some assumption about each agent's honest answer and the obtained clustering like the determinant of their multiplication is positive in expectation, we do not need to divide the tasks into two parts. We can just pay the agent the determinant of the product of her report matrix and clustering result. In running DMI-clustering, we can employ some easy heuristics to obtain an initialization. For example, we can run the surprisingly popular method or use plurality vote of high-payment agents to get an initialization.

\subsection{DMI-clustering vs. Surprisingly Popular}

Surprisingly popular method picks the option which is the most surprisingly popular compare to the prior. In the multi-task setting, we can use the average to estimate the prior. 

\begin{definition}[Surprisingly popular]
	 For task whose answer distribution is $\mathbf{a}$, we map it to cluster $\argmax_c \frac{a_c}{\bar{a}_c}$. 
\end{definition}

The surprisingly popular method shows a certain amount of robustness to people's strategies compared to plurality vote. For example, if a large set of people always report one option with 100\% probability, the plurality vote method will be distorted but the surprising popular method will decrease this distortion. In fact, in the binary question case, we will show that the surprisingly popular method is equivalent to DMI-clustering thus is robust to people's strategies. However, in general, without any prior assumption, the surprisingly popular method does not satisfy the legal nor affine-invariant property, thus is not robust to people's strategies even if the number of agents goes to infinite. We will show a visual comparison between DMI-clustering and the surprisingly popular method for three-options case in Figure~\ref{fig:svsd}.

\begin{corollary}[Binary question: \dmi-clustering = Surprisingly popular]

	In the binary case, \dmi-clustering will map $\mathbf{a}=(a_1,a_2)$ to cluster $i$ if $a_i>\bar{a}_i$ for $i=1,2$. 
\end{corollary}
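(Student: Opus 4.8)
The plan is to reduce the binary case directly to the one‑dimensional Corollary~\ref{coro:1d}, using the affine‑invariance already established in Theorem~\ref{thm:main}. In the binary setting each task's normalized answer distribution $\mathbf{a}=(a_1,a_2)$ satisfies $a_1+a_2=1$, so the data matrix $\mathbf{A}\in\mathbb{R}^{n\times 2}$ has every row summing to one. Hence in $\mathbf{A}^{+1}=[\mathbf{A}\quad\mathbf{1}]$ the all‑ones column is the sum of the two columns of $\mathbf{A}$; in the non‑degenerate case where the numbers $A_{11},\dots,A_{n1}$ (the first coordinates) are not all equal, $\mathbf{A}^{+1}$ has rank $k=2$, and Algorithm~\ref{alg:dmic} is free to take $\tilde{\mathbf{A}}=[\mathbf{u}\quad\mathbf{1}]$ as its two linearly independent columns, where $\mathbf{u}=(A_{11},\dots,A_{n1})^{\top}$ is the first column of $\mathbf{A}$.

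With this choice the optimization $\max_{\mathbf{C}\in\mathcal{C}}\det[\mathbf{C}^{\top}\tilde{\mathbf{A}}]$ is literally the computation carried out in the proof of Corollary~\ref{coro:1d} applied to the real numbers $A_{11},\dots,A_{n1}$: writing $\mathbf{C}=[\mathbf{p}\quad\mathbf{1}-\mathbf{p}]$ with $\mathbf{p}\in\{0,1\}^{n}$ and subtracting the average from $\mathbf{u}$ (legitimate by affine‑invariance), the objective becomes proportional to $\mathbf{p}^{\top}\mathbf{v}$ with $\mathbf{v}=(A_{11}-\bar{a}_1,\dots,A_{n1}-\bar{a}_1)$. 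Therefore the maximizer assigns task $t$ to one cluster exactly when $A_{t1}>\bar{a}_1$ and to the other cluster when $A_{t1}<\bar{a}_1$, ties being broken either way consistently with the $\idxmax$ convention.

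Finally I would translate this back into the symmetric form in the statement: since $a_2=1-a_1$ and $\bar{a}_2=1-\bar{a}_1$, the inequality $a_1>\bar{a}_1$ is equivalent to $a_2<\bar{a}_2$, so the cluster $\{t:A_{t1}>\bar{a}_1\}$ equals $\{t:A_{t2}<\bar{a}_2\}$ and the complementary cluster equals $\{t:A_{t2}>\bar{a}_2\}=\{t:A_{t1}<\bar{a}_1\}$. Labelling these cluster $1$ and cluster $2$ gives exactly: \dmi-clustering maps $\mathbf{a}=(a_1,a_2)$ to cluster $i$ iff $a_i>\bar{a}_i$. For the equality asserted in the corollary's title, I would add one line noting that the surprisingly‑popular rule $\argmax_c a_c/\bar{a}_c$ reduces, after cross‑multiplying the positive denominators $\bar{a}_1,\bar{a}_2$, to the very same test $a_1>\bar{a}_1$ versus $a_2>\bar{a}_2$, so the two methods coincide.

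I do not expect a genuine obstacle here: the argument is a direct specialization of the already‑proven one‑dimensional corollary plus affine‑invariance. The only minor points to watch are the degenerate subcase in which all $A_{t1}$ coincide (the claim is then vacuous, or collapses to the trivial $k=1$ situation) and keeping the tie‑breaking consistent with the definition of $\idxmax$.
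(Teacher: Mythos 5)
Your proposal is correct and follows essentially the same route as the paper: invoke affine-invariance, choose $\tilde{\mathbf{A}}$ to be the first coordinate column together with the all-ones column, and reduce to the one-dimensional Corollary so that the objective becomes $\mathbf{p}^{\top}\mathbf{v}$ with $\mathbf{v}$ the centered first coordinates, giving the split at the mean. The extra remarks (translating $a_1>\bar a_1$ into the symmetric statement and checking the surprisingly-popular rule coincides) are consistent with, and slightly more explicit than, the paper's own proof.
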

\begin{proof}
	In this case, $\mathbf{A}$ is a $n\times 2$ row-stochastic matrix. Since \dmi-clustering is affine-invariant, we subtract the row average from each row $\dmic(\mathbf{A})=\dmic(\mathbf{A}-\bar{\mathbf{a}})$. Note that $[\mathbf{A}-\bar{\mathbf{a}}\quad \mathbf{1}]$ is rank 2, we pick $\tilde{\mathbf{A}}$ as the first and third columns $[\mathbf{A}_{:,1}-\bar{a}_1\quad \mathbf{1}]$ and use $\mathbf{v}$ as a shorthand for $\mathbf{A}_{:,1}-\bar{a}_1$. Note that the sum of $\mathbf{v}$ is 0. Then by the same analysis in Corollary~\ref{coro:1d}, the optimal 0-1 vector $\mathbf{p}$ will pick all positive entries of $\mathbf{v}$. Thus, \dmi-clustering will map $\mathbf{a}=(a_1,a_2)$ to cluster $i$ if $a_i-\bar{a}_i>0$ for $i=1,2$. 
\end{proof}

\begin{figure}
    \center
	\includegraphics[width = 12cm]{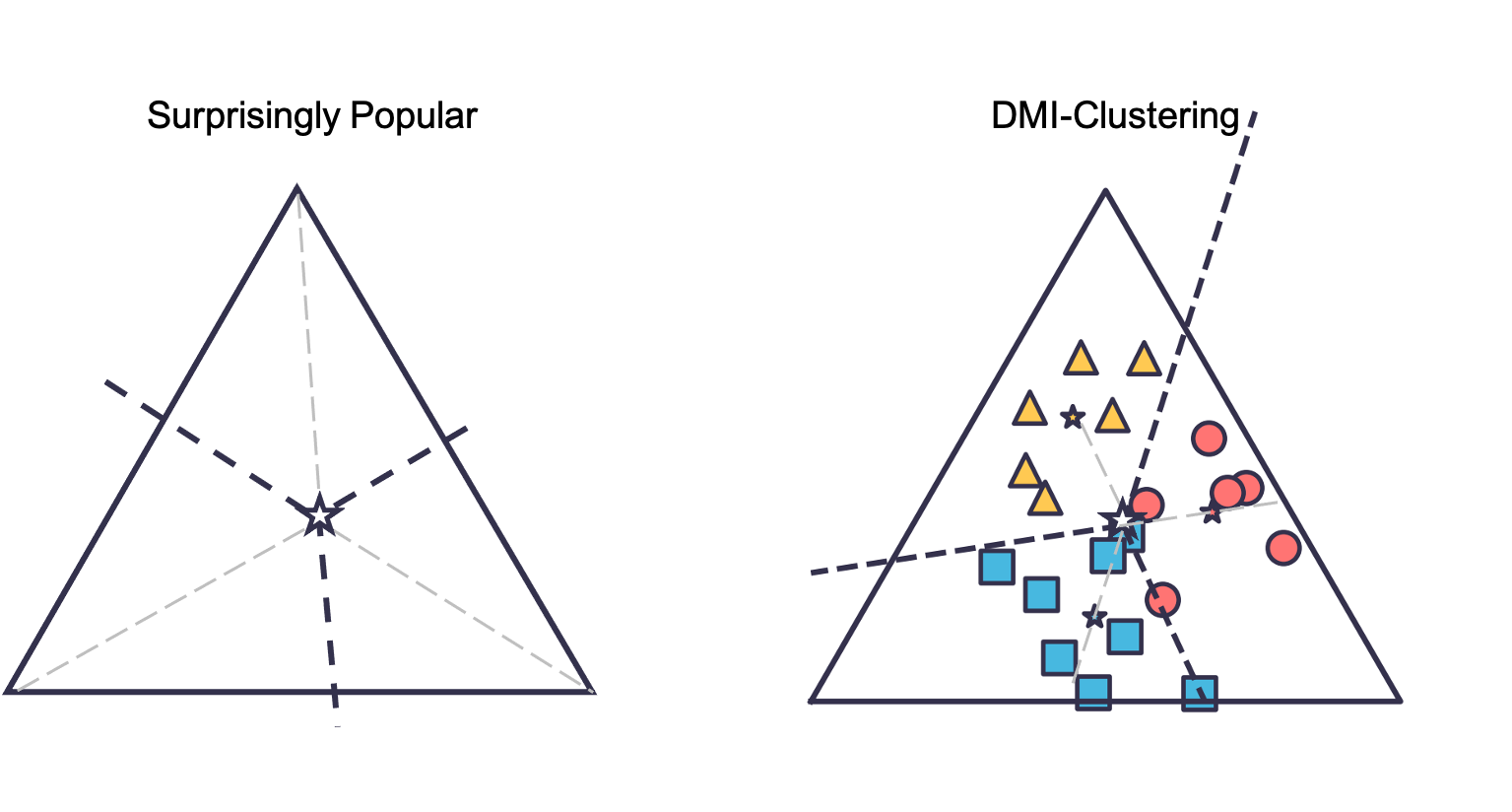}
	\caption{Surprisingly popular vs. DMI-clustering: both methods have the mean of all points as the center of the partitions. Left (surprisingly popular): the center determines the cluster, the partition lines (the black dashed lines) are generated by extending the line segment between the extreme point and the center. Right (DMI-clustering): the partition lines extend the line segments between each cluster's mean and the center. Thus, this partition can be seen as a ``rotated'' version of the surprisingly popular method's partition.}\label{fig:svsd}
\end{figure}

	Both surprisingly popular and DMI-clustering method set the mean of all points as the intersection of the partitions. We normalize each column of $\mathbf{A}$ by dividing it by its sum and will see that \dmi-clustering can be seen as a ``rotated'' version of surprisingly popular. 

\begin{corollary}[\dmi-clustering = ``rotated'' surprisingly popular]

With column-normalized $\mathbf{A}$, $\ncol(\mathbf{A})$, the surprisingly popular method is represented as $\mathbf{C}\leftarrow \idxmax(\ncol(\mathbf{A}))$. If $\ncol(\mathbf{A})$ is rank $C$, there exists $\mathbf{D}$ such that DMI-clustering's result is represented as $\mathbf{C}\leftarrow \idxmax(\ncol(\mathbf{A})\mathbf{D})$.
\end{corollary}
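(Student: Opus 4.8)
The plan is to prove the two assertions in turn, in both cases by writing column normalization out explicitly and then invoking the linearly-partitioned property already established in Theorem~\ref{thm:main}. \emph{For the surprisingly popular identity:} I would unfold $\ncol(\mathbf{A})$ entrywise. Column $c$ of $\mathbf{A}$ sums to $\sum_{t'}A_{t'c}=n\,\bar a_c$, so the $(t,c)$ entry of $\ncol(\mathbf{A})$ is $A_{tc}/(n\bar a_c)=\tfrac1n\cdot a_{tc}/\bar a_c$; hence row $t$ of $\ncol(\mathbf{A})$ equals $\tfrac1n\,(a_{t1}/\bar a_1,\dots,a_{tC}/\bar a_C)$, a uniformly positive rescaling of the vector whose largest coordinate defines the surprisingly popular cluster of task $t$. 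Since $\idxmax$ is insensitive to a positive global scalar, $\idxmax(\ncol(\mathbf{A}))$ assigns task $t$ to $\argmax_c a_{tc}/\bar a_c$, which is the definition of surprisingly popular. (Rank $C$ guarantees every $\bar a_c>0$, so everything here is well defined.)

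\emph{For the DMI-clustering representation:} I would first observe that the hypothesis forces every column sum $s_c:=\sum_t A_{tc}$ to be strictly positive — a vanishing column would drop the rank — so $\ncol(\mathbf{A})=\mathbf{A}\mathbf{N}$ with $\mathbf{N}:=\diag(1/s_1,\dots,1/s_C)$ invertible, and in particular $\rk(\mathbf{A})=\rk(\ncol(\mathbf{A}))=C$. In this rank-$C$ regime, exactly as in the proof of Theorem~\ref{thm:kdmi}, the all-ones column already lies in the column span of $\mathbf{A}$ (each row of $\mathbf{A}$ sums to one), so Algorithm~\ref{alg:dmic} may take $\tilde{\mathbf{A}}=\mathbf{A}$; the linearly-partitioned part of Theorem~\ref{thm:main} then says the DMI-clustering output $\mathbf{C}^*$ is a global (hence local) maximizer, so $\mathbf{C}^*\in\idxmax(\mathbf{A}\mathbf{D}^*)$ with $\mathbf{D}^*=(\mathbf{C}^{*\top}\mathbf{A})^{-1}$. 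Now set $\mathbf{D}:=\mathbf{N}^{-1}\mathbf{D}^*=\diag(s_1,\dots,s_C)\,\mathbf{D}^*$; then $\ncol(\mathbf{A})\mathbf{D}=\mathbf{A}\mathbf{N}\mathbf{N}^{-1}\mathbf{D}^*=\mathbf{A}\mathbf{D}^*$, so $\mathbf{C}^*\in\idxmax(\ncol(\mathbf{A})\mathbf{D})$, which is the claimed form. The ``rotated'' reading is then immediate: surprisingly popular is the case $\mathbf{D}=\mathbf{I}$ (up to the irrelevant scalar $1/n$), whereas DMI-clustering inserts the extra invertible factor $\mathbf{D}$ acting on the columns of $\ncol(\mathbf{A})$ before the $\idxmax$ — geometrically the reorientation of the partition lines shown in Figure~\ref{fig:svsd}.

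I do not expect a genuine obstacle here; the content is bookkeeping on top of the already-proven linearly-partitioned property. The points needing care are: (i) well-definedness of $\ncol(\mathbf{A})$ and the fact that right-multiplication by the invertible diagonal $\mathbf{N}$ preserves rank, which is precisely what the rank-$C$ hypothesis buys; (ii) being explicit that both the clean factorization $\ncol(\mathbf{A})=\mathbf{A}\mathbf{N}$ and the choice $\tilde{\mathbf{A}}=\mathbf{A}$ are legitimate \emph{only} in the rank-$C$ case — if $\mathbf{A}$ were rank-deficient one would have to keep the appended all-ones column, $\mathbf{D}^*$ would be larger, and the representation would no longer be a single right-multiplication of $\ncol(\mathbf{A})$; and (iii) noting that ties are resolved set-theoretically on both sides, so $\mathbf{C}^*\in\idxmax(\cdot)$ (rather than $=$) is the correct formalization, and the two sides match ties consistently since $\idxmax(\mathbf{A}\mathbf{N}\mathbf{N}^{-1}\mathbf{D}^*)=\idxmax(\mathbf{A}\mathbf{D}^*)$ as sets.
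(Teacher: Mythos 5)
Your proposal is correct and follows essentially the paper's own route: both arguments reduce the second claim to the linearly-partitioned property of Theorem~\ref{thm:main} after noting that the all-ones column is redundant when the rank is $C$, the only (cosmetic) difference being that you take $\tilde{\mathbf{A}}=\mathbf{A}$ and absorb the diagonal column-normalization factor into $\mathbf{D}$, whereas the paper invokes affine-invariance to feed $\ncol(\mathbf{A})$ in directly and take $\tilde{\mathbf{A}}=\ncol(\mathbf{A})$. Your explicit entrywise verification of the first claim matches what the paper dismisses as ``directly from the definition.''
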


\begin{proof}
The first statement follows directly from the definition of surprisingly popular method. It's left to prove the second statement. Since DMI-clustering is invariant right affine transformation to $\mathbf{A}$, we can use column-normalized $\mathbf{A}$ as input. For $[\ncol(\mathbf{A}) \quad \mathbf{1}]$, the last column $\mathbf{1}$ is a linear combination of $\ncol(\mathbf{A})$'s columns since $\mathbf{A}$'s columns sum to $\mathbf{1}$. If $\ncol(\mathbf{A})$ is rank $C$, then we can pick $\tilde{\mathbf{A}}=\ncol(\mathbf{A})$. Thus, there exists $\mathbf{D}$ such that DMI-clustering's result is represented as $\mathbf{C}\leftarrow \idxmax(\ncol(\mathbf{A})\mathbf{D})$.

\end{proof}

\section{Single-task \& Method of Moments}

The biggest advantage of surprisingly popular is that it only uses the first moment information. Thus, it is originally proposed for the single-task setting. In this section, we will introduce the single-task setting and show that from a clustering view, we can utilize the second moment information to propose a spectral clustering based mechanism which can be better than surprising popular in some settings. 

\paragraph{Setting} Agents are assigned a single task (e.g. Have you ever text while driving before?). Agents finish the task without any communication. Each agent $i$ is asked to report her signal $c_i$ (e.g. No) and her prediction $\mathbf{p}_i$ for other people's signals (e.g. I think 10\% people say yes and 90\% people say no).

\begin{algorithm}[H]
\SetAlgoLined
\DontPrintSemicolon
\KwInput{$(c_i,\mathbf{p}_i)_i$ } 
\KwOutput{c}
$\mathbf{a}\in\mathbb{R}^{1\times C}$ where $\forall c, a_c$ is the ratio of people who answers $c$\;
$\forall c,c', \Pr[c'|c]=\frac{\sum_i\mathbbm{1}(c_i=c)\mathbf{p}_i(c')}{\sum_i\mathbbm{1}(c_i=c)}$\;
$\forall c, \Pr[c]=(\sum_{c'} \frac{\Pr[c'|c]}{\Pr[c|c']})^{-1}$, $\forall c,c', \Pr[c',c]=\Pr[c'|c] \Pr[c]$\tcc*{constructing first moment using the second moment}

$\mathbf{v}\in\mathbb{R}^{1\times C}$ where $\forall c, v_c:=\Pr[c]$\tcc*{first moment}
Return $\argmax_c \frac{a_c}{v_c}$

  \caption{Surprisingly popular (1st moment) \cite{prelec2017solution}}\label{alg:sp}
\end{algorithm}

\begin{algorithm}[H]
\SetAlgoLined
\DontPrintSemicolon
\KwInput{$(c_i,\mathbf{p}_i)_i$ } 
\KwOutput{$\plus$ or $\minus$}
Get answer collections $\mathbf{a}$, prior $\Pr[c],\forall c$, joint distribution $\Pr[c,c'],\forall c,c'$ like Algorithm \ref{alg:sp}\;
$\mathbf{v}\in\mathbb{R}^{1\times C}$ where $\forall c, v_c:=\Pr[c]$\tcc*{first moment}
$\mathbf{M}\in\mathbb{R}^{C\times C}$ where $\forall c,c', M_{c,c'}=\Pr[c,c']$\tcc*{second moment}
$\mathbf{Cov}=\mathbf{M}-\mathbf{v}^{\top}\mathbf{v}$\;
Get the first eigenvector $\mathbf{e}^*$ of $\mathbf{Cov}$\;
Return the sign of the inner product between $\mathbf{e}^*$ and $\mathbf{a}-\mathbf{v}$. 

  \caption{Spectral truth serum (2nd moment)}
\end{algorithm}

\begin{figure}
    \center
	\includegraphics[width = 11cm]{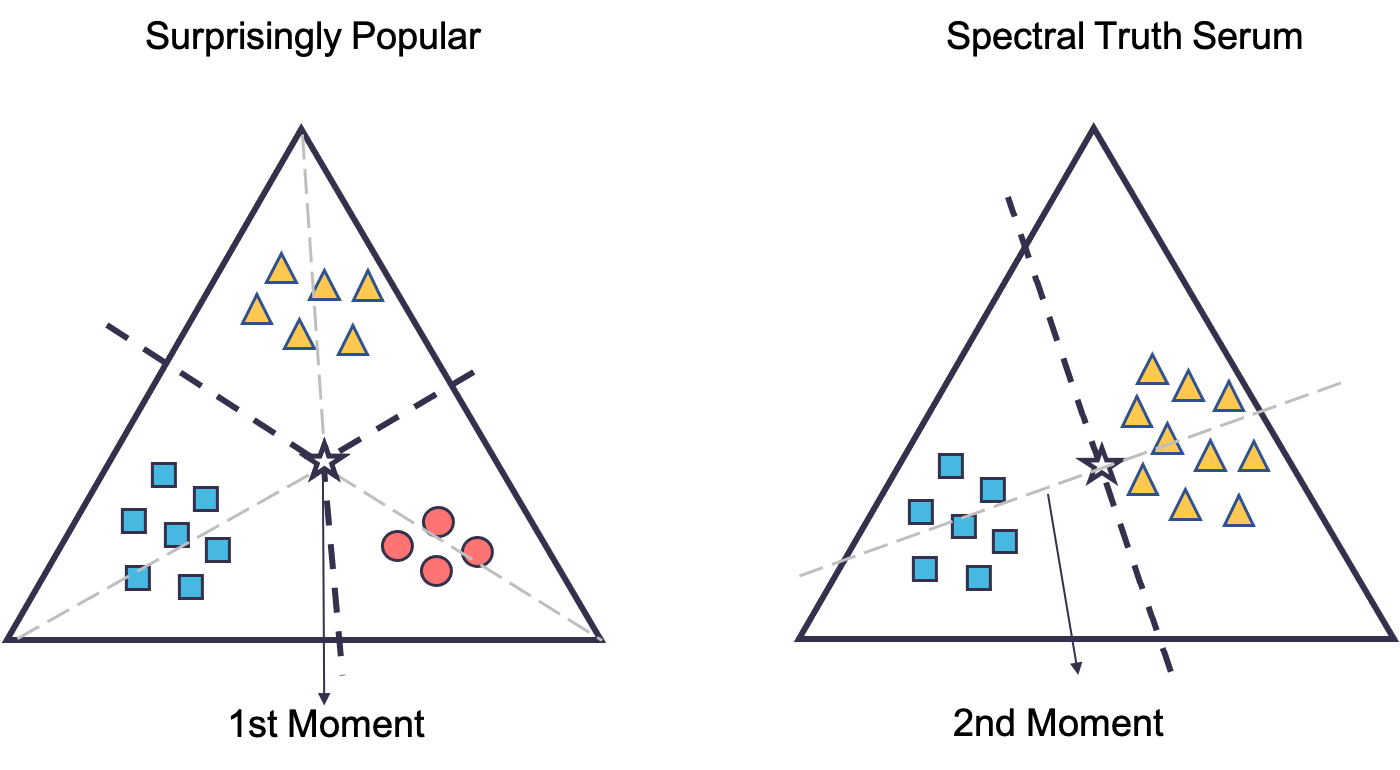}\label{fig:sts}
	\caption{\textbf{Surprisingly popular vs. Spectral Truth Serum}: the black dashed lines are partitions. In the right figure, the grey dashed line is $\mathbf{e}^*$. If the ground-truth clustering is more like the left case, the surprisingly popular method is better; if the ground-truth clustering is more like the right case (two worlds, spherical gaussians), the spectral truth serum is better.}
\end{figure}

\begin{lemma}\cite{blum2016foundations}
	The first eigenvector $\mathbf{e}^*$ of the covariance matrix of mixture of two spherical gaussians passes through the means. 
\end{lemma}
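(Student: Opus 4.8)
The plan is to reduce the claim to a concrete eigenvector computation for the covariance matrix of a balanced (or weighted) two-component mixture. First I would set up notation: let the two spherical Gaussians have means $\bm{\mu}_1,\bm{\mu}_2\in\mathbb{R}^C$, common covariance $\sigma^2\mathbf{I}$, and mixing weights $w_1,w_2$ with $w_1+w_2=1$. Write the global mean as $\bm{\mu}=w_1\bm{\mu}_1+w_2\bm{\mu}_2$ and the centered means $\bm{\nu}_j=\bm{\mu}_j-\bm{\mu}$, so that $\bm{\nu}_1$ and $\bm{\nu}_2$ are both parallel to the direction $\bm{\mu}_1-\bm{\mu}_2$ (indeed $\bm{\nu}_1=w_2(\bm{\mu}_1-\bm{\mu}_2)$ and $\bm{\nu}_2=-w_1(\bm{\mu}_1-\bm{\mu}_2)$).

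The key step is the law-of-total-variance decomposition of the mixture covariance: $\mathbf{Cov} = \mathrm{E}[\mathrm{Cov}(X\mid Z)] + \mathrm{Cov}(\mathrm{E}[X\mid Z])$, where $Z\in\{1,2\}$ is the component indicator. The first term equals $\sigma^2\mathbf{I}$ since each component is spherical with the same $\sigma^2$. The second term is the covariance of the two-point distribution on $\{\bm{\mu}_1,\bm{\mu}_2\}$ with weights $w_1,w_2$, which I would compute directly as $w_1\bm{\nu}_1\bm{\nu}_1^\top + w_2\bm{\nu}_2\bm{\nu}_2^\top = w_1w_2(\bm{\mu}_1-\bm{\mu}_2)(\bm{\mu}_1-\bm{\mu}_2)^\top$, a rank-one matrix whose column space is spanned by $\mathbf{u}:=\bm{\mu}_1-\bm{\mu}_2$. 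Hence $\mathbf{Cov} = \sigma^2\mathbf{I} + w_1w_2\,\mathbf{u}\mathbf{u}^\top$.

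From this explicit form the conclusion is immediate: any vector orthogonal to $\mathbf{u}$ is an eigenvector with eigenvalue $\sigma^2$, while $\mathbf{u}$ itself is an eigenvector with eigenvalue $\sigma^2 + w_1w_2\|\mathbf{u}\|^2 > \sigma^2$. Therefore the top eigenvector $\mathbf{e}^*$ is (a unit multiple of) $\mathbf{u}=\bm{\mu}_1-\bm{\mu}_2$, i.e. it points along the line joining the two means, and since that line also passes through $\bm{\mu}$ it "passes through the means" in the stated sense. I would close by noting uniqueness of the top eigendirection holds whenever $\bm{\mu}_1\neq\bm{\mu}_2$ and $w_1,w_2>0$.

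The only real subtlety — and the place I would be most careful — is the exact sense in which $\mathbf{e}^*$ "passes through the means": $\mathbf{e}^*$ is a \emph{direction}, so the precise statement is that the affine line $\{\bm{\mu}+t\,\mathbf{e}^*: t\in\mathbb{R}\}$ contains both $\bm{\mu}_1$ and $\bm{\mu}_2$, which follows because $\bm{\mu}_1-\bm{\mu}$ and $\bm{\mu}_2-\bm{\mu}$ are scalar multiples of $\mathbf{u}\parallel\mathbf{e}^*$. Everything else is the routine total-variance computation; there is no real obstacle beyond bookkeeping, and indeed the paper cites \cite{blum2016foundations} for it, so a short self-contained argument along these lines suffices.
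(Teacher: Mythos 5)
Your argument is correct and is essentially the standard proof from the cited source (\cite{blum2016foundations}); the paper itself gives no proof, relying on that citation, and your law-of-total-variance decomposition $\mathbf{Cov}=\sigma^2\mathbf{I}+w_1w_2\,\mathbf{u}\mathbf{u}^{\top}$ with $\mathbf{u}=\bm{\mu}_1-\bm{\mu}_2$ is exactly the intended route. The only cosmetic remark is that ``spherical'' need not mean equal variances, but the same computation goes through with $w_1\sigma_1^2+w_2\sigma_2^2$ in place of $\sigma^2$, so nothing is lost.
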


The above lemma shows that if the two Gaussian are sufficiently separated, we can project all points into $\mathbf{e}^*$ and set the overall mean $\mathbf{v}$ as the separator. This leads to spectral truth serum. Thus, in the scenarios that the ground-truth clustering is more like a well-separated mixture of two spherical Gaussian, spectral truth serum is better than surprisingly popular. Intuitively, spectral truth serum is applicable to the setting where there is only two underlying ground-truth states (e.g. fake news or not) and we elicit many signals from the crowds. The eigenvector of the covariance matrix tells us which signal is important and which signal is not. We then project people's feedback into the eigenvector, the important direction, and compare it to the average.

\citet{hsu2013learning} show that with a third-order moment, a mixture of $k$ gaussians can be detected. A future direction will be using a hypothetical question (e.g. will you update your prediction for other people's answers if you have seen some one answers A?) to elicit a third-order moment and obtain a better information aggregation results. 

\section{Conclusion and Discussion}

We provide a new clustering perspective for information elicitation. With this perspective, we propose a new information elicitation mechanism which is not only dominantly truthful but also aggregates information robustly when there are a large number of people in the multi-task setting. We also propose a spectral method based information aggregation mechanism in the single-task setting. Of independent interest, we propose a novel affine-invariant clustering method, DMI-clustering method. 

The geometric interpretation of DMI-clustering can be extended to any other $k\leq d$ easily.  For example, in a three-dimensional space, when we pick $k=2$, the clustering goal is to find two clusters such that the distance between their mean, multiplying the product of cluster sizes is maximized. We know that picking other $k$ will lose the affine-invariant property but we can still explore the properties of other $k$ in the future. 

Another interesting future direction is to apply other clustering techniques like non-linear clustering techniques (e.g. kernel method) to information elicitation. This work focuses on discrete setting and we can also consider continuous setting or study new information elicitation scenarios to provide new settings for clustering. Empirical validation is also an important future direction.

\bibliographystyle{ACM-Reference-Format}
\bibliography{ref}

\appendix 

\section{Synthetic Data Used for Demonstration}

\paragraph{Affine-invariance in Figure~\ref{fig:prop}} 

\[\mathbf{A}=\begin{bmatrix}
  0.96536243 & 0.83582504\\
  0.02223537 & 0.97962069\\
  0.18576474 & 0.09306992\\
  0.60073919 & 0.06909198\\
  0.21115965 & 0.04303247\\
  0.24518684 & 0.46305449\\
  0.0045001 & 0.73335878\\
\end{bmatrix}
\quad 
\mathbf{T}=\begin{bmatrix}
  0.7384872 & 0.39051911\\
  0.75115812 & 0.90684574\\
\end{bmatrix} \quad \mathbf{b} = b = [.05,.02]\]

\paragraph{An empirical run of k-cofactors in Figure~\ref{fig:empirical}} 

\[\mathbf{A}=\begin{bmatrix}
  0.32786192 & 0.75198752\\
  0.24789876 & 0.40110656\\
  0.64860833 & 0.05877561\\
  0.13688932 & 0.89837639\\
  0.51958647 & 0.69553312\\
  0.57106034 & 0.39534018\\
  0.25602628 & 0.5299542\\
  0.66521762 & 0.25370157\\
  0.49647253 & 0.51707767\\
  0.00649304 & 0.78141629\\
  0.93642378 & 0.00550147\\
  0.08014735 & 0.94850578\\
  0.54978178 & 0.66296321\\
  0.57201864 & 0.7952191\\
  0.11499522 & 0.59186407\\
  0.11681198 & 0.90701099\\
  0.01623413 & 0.59313684\\
  0.31841007 & 0.43924738\\
  0.60377183 & 0.96489354\\
  0.48738826 & 0.17097179\\
  0.49460288 & 0.81759015\\
  0.27831212 & 0.76869342\\
  0.22953897 & 0.43809347\\
  0.03654411 & 0.6203217\\
  0.90628651 & 0.45924219\\
  0.92525839 & 0.69213278\\
  0.17723574 & 0.12487727\\
  0.21277346 & 0.34931579\\
  0.84758762 & 0.05452689\\
  0.65204294 & 0.82808225\\
\end{bmatrix}
\]

\paragraph{DMI-clustering in Figure~\ref{fig:svsd}}

\[\mathbf{A}=\begin{bmatrix}
  0.20727033 & 0.56209307 & 0.2306366\\
  0.55235357 & 0.26311054 & 0.18453589\\
  0.06826729 & 0.51504916 & 0.41668355\\
  0.40863481 & 0.45383908 & 0.13752611\\
  0.30463115 & 0.19875226 & 0.49661659\\
  0.40387463 & 0.1261351 & 0.46999028\\
  0.30097293 & 0.32668147 & 0.3723456\\
  0.0530016 & 0.41863456 & 0.52836383\\
  0.53632014 & 0.08514494 & 0.37853491\\
  0.34180366 & 0.57497414 & 0.08322221\\
  0.20603456 & 0.67360041 & 0.12036503\\
  0.35331477 & 0.28461216 & 0.36207308\\
  0.23868633 & 0.38453375 & 0.37677992\\
  0.098419 & 0.67327932 & 0.22830168\\
  0.33219443 & 0.0159078 & 0.65189777\\
  0.40376774 & 0.39908311 & 0.19714915\\
  0.08966334 & 0.40876422 & 0.50157244\\
  0.51224362 & 0.01623191 & 0.47152447\\
  0.04907822 & 0.30059226 & 0.65032952\\
  0.50445751 & 0.20957023 & 0.28597227\\
\end{bmatrix}
\]

\end{document}